\renewcommand{\thispagestyle}[1]{}
\date{}
\def\texpsfig#1#2#3{\vbox{\kern #3\hbox{\includegraphics{#1}\kern #2}}\typeout{(#1)}}
\definecolor{myblue}{RGB}{136, 135, 214} 
\theoremstyle{plain}
\newtheorem{thm}{Theorem}[section]
\newtheorem{dfn}[thm]{Definition}
\newtheorem{assumption}{Assumption}
\newtheorem{rem}{Remark}
\theoremstyle{remark}
\theoremstyle{plain}
\newtheorem{lem}[thm]{Lemma}
\theoremstyle{definition}
\def\R{\mathbb{ R}}             
\def\Q{\mathbb{ Q}}             
\def\N{\mathbb{ N}}
\def\F{\mathcal{F}}  
\def\M{\mathcal{M}}  
\renewcommand{\d}{{\rm d}}      
\def\dW{{\rm d}W}               
\def\dt{{\rm d}t}
\def\1{{\mathbbm{1}}}            
\DeclarePairedDelimiter{\norm}{\lVert}{\rVert} 
\DeclareMathOperator*{\argmin}{arg\,min}
\theoremstyle{plain}
\newtheorem{theorem}{Theorem}[section]
\newtheorem{example}{Example}[section]
\numberwithin{equation}{section}	     
\title{ \raggedright  Basket Options with Volatility Skew: Calibrating a Local Volatility Model by Sample Rearrangement}
\begin{document}
\author[1,3]{NICOLA F. ZAUGG\corref{cor1}}
\ead{n.f.zaugg@uu.nl}
\author[1,2]{LECH A. GRZELAK}
\ead{L.A.Grzelak@uu.nl}
\cortext[cor1]{Corresponding author at Mathematical Institute, Utrecht University, Utrecht, the Netherlands.}
\address[1]{Mathematical Institute, Utrecht University, Utrecht, the Netherlands}
\address[2]{Financial Engineering, Rabobank, Utrecht, the Netherlands}
\address[3]{Capital Markets Technology, swissQuant Group AG, {Z\"u}rich, Switzerland}
\begin{abstract}
The pricing of derivatives tied to baskets of assets demands a sophisticated framework that aligns with the available market information to capture the intricate non-linear dependency structure among the assets. We describe the dynamics of the multivariate process of constituents with a copula model and propose an efficient method to extract the dependency structure from the market. The proposed method generates coherent sets of samples of the constituents process through systematic sampling rearrangement. These samples are then utilized to calibrate a local volatility model (LVM) of the basket process, which is used to price basket derivatives. We show that the method is capable of efficiently pricing basket options based on a large number of basket constituents, accomplishing the calibration process within a matter of seconds, and achieving near-perfect calibration to the index options of the market.
\noindent
\end{abstract}

\begin{keyword}
Basket Options, Local Volatility, Correlation Structure, Copula, Rearrangement Algorithms
\end{keyword}
\maketitle

{\let\thefootnote\relax\footnotetext{The views expressed in this paper are the personal views of the authors and do not necessarily reflect the views or policies of their current or past employers. The authors have no competing interests.}}

\section{Introduction}
Basket derivatives are a common class of exotic financial instruments. These derivatives depend on the performance of a linear combination of multiple underlying assets, referred to as the constituents of the basket. Constituents typically include equities, currencies, or commodities, although theoretically, they can be any financial asset. Basket products take various forms, ranging from vanilla options to intricate structured products that involve more advanced payoffs and additional interest rate components. They enable trading participants to gain exposure to multiple assets with a single trade, thereby carrying a significant amount of correlation risk.

 Basket derivatives are exotic derivatives and are commonly traded over-the-counter (OTC), meaning they are usually illiquid instruments. Some exceptions are derivatives on stock indices,  commodity indices, or interest rate spreads (hereafter denoted as ``index derivatives"), which are also classified as basket products. Vanilla options on such baskets often exhibit high liquidity and are actively traded on exchanges. For instance, the equity index options volume exceeded the single stock option volume by more than 400\% in 2022 \cite{euronext_data}. Considering a substantial portion of basket products are traded over-the-counter, the development of an efficient and consistent risk-neutral pricing framework becomes crucial for traders trying to offer a fair price to the market and hedge the risks accordingly \cite{qu2010pricing}. Financial modelers encounter two primary challenges in implementing such frameworks. The first challenge involves addressing the inherent high dimensionality of the problem. Because each constituent contributes to the overall performance of the derivative, numerical approximations rapidly become computationally expensive as their calibration and evaluation time increases substantially. An effective pricing framework frequently incorporates efficient approximations and dimensionality reductions to enhance performance as needed.

Secondly, a pricing framework has to be calibrated to the available market data in a consistent way. Once calibrated, the model should replicate the market price of all liquid financial instruments in the scope of the model. It requires enough flexibility to price constituents' implied volatility skew as well as the index volatility skew, for which liquid market data is available.

Suppose that an index $I(t)$ consists of $N$ constituents $S_1(t), S_2(t), \dots S_N(t)$. The price process of the index at time $t$ is then given by
\begin{equation}
\label{eq:index}
    I(t) = \sum_{n=1}^N w_n S_n(t),
\end{equation}
where $w_n$ are the weights of the index. Since the prices of the constituents can be scaled individually, let us, without loss of generality, assume that $w_n = 1, \; \forall n \leq N$, such that the index is the sum of all constituents. Suppose that $N_I = \N_{\leq N}$ is the set of all integers below or equal to $N$. A basket $B(t)$ is formed from any subset $N_B \subset N_I$ of constituents and is also equal to the sum of its (possibly weighted) constituents
\begin{equation}
\label{eq:basket}
    B(t) =  \sum_{n\in N_B}  S_n(t).
\end{equation}
In a risk-neutral setting, the constituents $S_n(t)$ are random processes on a probability space $(\Omega, \F(t), \Q)$, where the dynamics of $S_n(t)$ are determined by the market prices of vanilla options on the constituents.

To price a derivative whose payoff depends on \Cref{eq:basket}, we require information about the distribution of the process $B(t)$. Since the marginal distribution functions of $S_n(t)$ are determined by the individual constituents' options, the difficulty in modeling the distribution of $B(t)$ lies solely in the joint distribution of the constituents. While the essence of such a pricing framework is \emph{not} to price derivatives on index $I(t)$, the availability of liquid index options provides certain market-implied information about the joint distribution between the individual constituents. Extracting this information and embedding it in the model allows us to price basket options on any subset of constituents consistently.

A practical approach to modeling basket distributions is the so-called \emph{correlated local volatility model} \cite{xu2010basket}. The model assumes correlated local volatility diffusion processes for the individual constituents, where the correlations are based on historical data or calibrated to index ATM (At-The-Money) volatilities. The basket process is then simulated using Monte-Carlo methods or derived from some analytical approximations (See, for instance, \cite{Pirjol_2023}).

It is widely acknowledged that such an approach suffers from a key pitfall \cite{Branger_2004,Grzelak_Jablecki_Gatarek_2023,langnau2010dynamic}. As mentioned earlier, to properly capture the volatility skew of the index, the probability distribution of $I(t)$ must align with the option-implied risk-neutral probability distribution. However, the probability distribution assumed by the model is usually inconsistent with its implied distribution by the market. This inconsistency means that the index option smile/skew cannot be faithfully replicated \cite{Grzelak_Jablecki_Gatarek_2023, langnau2010dynamic}. The primary issue lies in the fact that the correlated Brownian motions, which model the linear dependency between the assets, are insufficient to explain the complex dependency structure among the constituents. Consequently, this leads to inconsistencies in the smile/skew, as the calibration procedure can only match ATM volatilities and the model will lead to large mispricing in volatile market conditions \cite{bedendo2010pricing}.  A detailed discussion of this issue will be provided in \Cref{sec:ModelDescription}.

The most notable class of models aimed at resolving the issue of the index skew is the class of so-called local correlation models, extensively discussed in the literature \cite{langnau2010dynamic, Reghai_2010, Guyon_2016}. In these models, the correlation between constituents at time $t$ depends on the state $S_n(t)$ or $I(t)$. While these models effectively address the problem of non-constant correlation, they face significant drawbacks due to their high computational complexity. Specifically, each constituent of the index needs to be simulated using the complex non-constant correlation in a numerical scheme, which leads to inefficiencies. Efforts to resolve this have been made in \cite{Koster_2019}, which promise lower computational costs, but in turn come at the cost of imperfect calibration to the index skew.

Parallel to local correlation models are stochastic correlation models, grounded in correlation as a Jacobi process and further enriched with jumps \cite{Zetocha_2015}. While these models provide insights into the precise nature of the correlation skew among assets, their practical usability is limited due to lower flexibility with similar computational complexity, compared to the local correlation model.

While local and stochastic correlation models can reprice the index options by enhancing the flexibility of the model with extra parameters, at the cost of computational complexity, they only partially capture the underlying issue: correlation is not sufficient to model the joint distribution. Aiming to address this issue more fundamentally, copula models are proposed as an alternative to correlation models (e.g., \cite{salmon2006pricing,van2005bivariate, Lucic_2012}). Copula models leverage a mathematical tool to describe complex dependencies among random variables, capturing the intricate nature of joint distributions. While these models are mathematically intriguing, explicit copula models are limited to bivariate copulae or lack dynamic form and provide no straightforward method for calibration other than a brute-force search for parameters \cite{bedendo2010pricing}.

A feature shared by all mentioned approaches so far is that they propose a multivariate model, meaning that each constituent asset price is modeled explicitly by a stochastic differential equation (SDE). For pricing derivatives on the process $B(t)$, however, it would suffice to model the dynamics of $B(t)$ with a one-dimensional SDE, for instance, by deriving a local volatility model (LVM)
\begin{equation}
\label{eq:modelvol}
    \frac{\d B(t)}{B(t)} = r\dt + \sigma_{LV}(t,B(t)) \dW(t),\;\; t \geq 0,
\end{equation}
where the local volatility function $\sigma_{LV}(t,B(t))$ is given by Dupires formula\footnote{Alternatively, an equivalent formulation using implied volatility can be used. See \cite[Section 4.3.1]{Oosterlee_Grzelak_2020}} \cite{Oosterlee_Grzelak_2020},
\begin{equation}
\label{eq:lvariance}
 \sigma_{LV}^2(t,k) = \frac{\frac{\partial}{\partial t}\left[e^{-rt} \int_k^\infty (y-k) f_{B(t)}(y)\d y\right] + rk \left(\int_{-\infty}^k f_{B(t)}(y)\d y -1\right)}{\frac{1}{2}k^2 f_{B(t)}(k)},
\end{equation}
with $f_{B(t)}$ the probability density function (PDF) of the basket $B(t)$ at time $t$ and $r$ the prevailing interest rate. While the calibration of this model still requires a suitable multidimensional model to derive $f_{B(t)}$ from the component processes, this model of $B(t)$ has an advantage: The estimation of $f_{B(t)}$ at a fixed time $t$ can be done statically, meaning that $f_{B(t)}$ can be determined \emph{independently} for a time discretization $t \in \{t_1,t_2,\dots,t_E\}$ with $E$ time steps. The probability densities for each $t$ can then be interpolated and combined to determine the local volatility function in continuous time. Such a static calibration method was proposed by \citet{Grzelak_Jablecki_Gatarek_2023} where the authors aim to estimate $f_{B(t)}$ by constructing a set of samples of $B(t)$ for all maturity times $t$ available in the market. The sample sets are initialized on a constituent level with an appropriate sampling scheme to enforce a dependency between the constituents and then combined to create a sample of the basket $B(t)$. The probability density function $f_{B(t)}$ is then estimated from the empirical distribution of the samples. The authors thus show that the problem of calibrating the local volatility function of \Cref{eq:lvariance} is reduced to constructing (static) sample sets of $B(t)$ using an appropriate sampling scheme. 

A particular sampling scheme that is suitable for this type of problem are rearrangement algorithms. Rearrangement algorithms are used to construct samples of multivariate random variables when the marginal distributions are given, but only limited information about the joint probability distribution is available. These algorithms have found success in mathematical finance before, most notably in the context of risk management \cite{embrechts2013model}. \citet{bernard2021model} showed that the algorithms are applicable to the index/basket option setting as the constituent PDF $f_{S_n(t)}$ are given by the constituents option market. The constituent samples are first initialized marginally and are then rearranged to ensure that the resulting empirical distribution of the index samples (obtained by summing up the rearranged samples) is ``admissible", meaning that it is close to the probability distribution obtained from the market. The empirical distribution of $B(t)$ is then available by summing up the appropriate constituent samples. Rearrangement algorithms implicitly extract a dependency structure from the available market information by finding an optimal arrangement, such that the samples reflect the available information provided by the cumulative distribution function (CDF) $F_{I(t)}$ of the index. 

In this paper, we will describe the theoretical and practical foundation of utilizing rearrangement algorithms in the context of basket derivatives. Using a copula model we first describe a risk-neutral pricing framework where each constituent process is marginally given by a local volatility model. We then show how rearrangement algorithm are used to extract a market-implied dependency structure to calibrate the copula of the processes. Building on the theoretical foundation, we introduce our own algorithm which we call the Iterative-Sort-Mix algorithm (ISM). This algorithm generates constituent samples for an index with relatively low computation requirements. In the final step, the samples of the algorithm are used to calibrate the local volatility model of the basket $B(t)$ given by (\ref{eq:modelvol}), which can then be used to price vanilla and exotic payoffs on the basket $B(t)$. We assess the effectiveness of the algorithm using real market data for index options on the Dow Jones Industrial Average (DJIA) with 30 underlying constituents. The algorithm successfully prices the volatility skew for all available time to maturities, establishing its suitability as a pricing model for basket derivatives. 

The paper is structured as follows: in \Cref{sec:ModelDescription}, we first describe the general multivariate copula model and explain the important aspects of a suitable dependency structure, particularly highlighting the inadequacy of correlated LVMs. Furthermore, we discuss the problem of underdetermination when calibrating a copula model to the market. In \Cref{sec:Implied}, we provide the mathematical foundation for a rearrangement algorithm to generate the samples required to estimate $f_{B(t)}$, and show how to apply the theory from  \Cref{sec:ModelDescription} to rearrangement algorithms. In \Cref{sec:algorithm}, we introduce the ``Iterative Sort-Mix" algorithm, an implementation of a rearrangement algorithm, which can be used to price basket options. Finally, we apply the algorithm to actual data in \Cref{sec:marketData} and we draw our conclusions in \Cref{sec:concl}.

\section{Copula Models: Joint Distribution of Constituents}
\label{sec:ModelDescription}
\subsection{Model Description}
\label{impliedCorr}
In this paper, we study the joint dynamics of $N$ random processes (called constituent processes) 
\begin{equation}
\label{eq:model}
    \left(S_1(t),S_2(t),\dots,S_N(t)\right),
\end{equation} on a probability space $(\Omega, \F(t), \Q)$, where $\Q$ is the risk neutral measure. We will model the individual marginal processes $S_n(t)$ with a local volatility model given by the local volatility function $\sigma_{n, LV}(t,S_n(t))$ and the constant interest rate $r$ as
\begin{equation}
\label{eq:lvm}
    \frac{ \d S_n(t)}{S_n(t)} = r  \d t + \sigma_{n, LV}(t,S_n(t)) \d W_n(t).
\end{equation}
The Brownian motions $W_n(t)$ are dependent on each other and their dependency structure will be defined later. Given a subset $N_B$ of the $N$ constituents,  the price process of the basket is obtained by summing the individual constituents of the basket.
\begin{equation}
    B(t) =  \sum_{n\in N_B}  S_n(t).
\end{equation}
The process $B(t)$ can also be modeled by a ``direct" stochastic differential equation, given by a local volatility model
\begin{equation}
\label{eq:directModelvol}
    \frac{\d B(t)}{B(t)} = r\dt + \sigma_{LV}(t,B(t)) \dW(t),\;\; t \geq 0.
\end{equation}
As we hinted in the introduction, the key to consistent basket option pricing lies in modeling the dependency structure among constituent assets. While the dynamics of \Cref{eq:lvm} determine the cumulative distribution function of constituent assets, the sum's CDF is dictated by the joint distribution of the constituents and, consequently, by the dependency structure between the individual assets. Suppose that $F_{S_n(t)}$ is the probability distribution function of $S_n(t)$. To describe the joint distribution of the constituents, we will use a (time-dependent) copula, which is given by the function $C \colon [0,\infty) \times [0,1]^N \to [0,1]$, such that
\begin{equation}
    C(t,u_1,u_2,\dots,u_N) = \Q(F_{S_1(t)}(S_1(t)) < u_1, F_{S_2(t)}(S_2(t)) < u_2,\dots,F_{S_N(t)}(S_N(t)) < u_N).
\end{equation}
The copula function is assumed to be sufficiently smooth in the space variables to admit a density copula for all times $t \in [0,\infty)$ \begin{equation} 
c(t,u_1,u_2,\dots,u_N) = \frac{\partial^N  C(t,u_1,u_2,\dots,u_N) }{\partial u_1\partial u_2 \dots \partial u_N }.\end{equation}
We will characterize the set of all copulae of the joint random vector $(S_1(t),S_2(t), \dots, S_N(t))$ as $\Theta$ , such that $\theta \in \Theta$ determines a unique copula function $C_\theta$.

The challenge in pricing basket options lies in selecting a suitable copula function $C$ that is coherent with the market data. Suppose that the market has a certain view on the dependence of an index's constituents. This dependence is reflected in the market-implied risk-neutral probability distribution of $I(t)$ at time $t \in [0, \infty)$. The availability of liquid index option prices allows us to extract this probability distribution, which we call $F_{I^\text{Mkt}(t)}(x)$. The copula $C$ must thus be chosen in accordance with this market expectation \footnote{This is equivalent to saying that the model replicates the implied volatility skew of the market, meaning that vanilla options prices from the model will be equal to the market prices.}, i.e. 
\begin{equation}
\label{eq:rnd}
    F_{I(t)}(x) = \Q(I(t) < x) = \Q\left(\sum_{n=1}^N S_n(t) < x\right) = F_{I^\text{Mkt}(t)}(x), \hspace{0.5cm} \forall x \in \R,\; t \in [0,\infty).
\end{equation}
We call a copula $C$ which agrees with the market-implied distribution an \emph{admissible copula}.
\begin{dfn}[Admissible Copula]
\label{def:implied}
Suppose that $S_1(t),S_2(t), \dots S_N(t)$ have a joint distribution with copula $C$ such that \Cref{eq:rnd} holds.
Then, we call the copula function $C$ an admissible copula. We denote the set of all admissible copulae as $\Theta_A \subset \Theta$.
\end{dfn}
Note that the existence of an admissible copula depends on an assumption of efficiency in the market. The distribution $F_{I^\text{Mkt}(t)}$ is obtained from index option quotes, while we calibrate the dynamics of the individual constituents to single constituents' option quotes. We will assume that the two sets of option quotes are consistent with each other, implying that there is no arbitrage in the market.  
\begin{assumption}[Market Efficiency]
\label{assump:1}
We will assume that the index options priced in the market are in accordance with the vanilla options of its constituents, and thus, that an admissible copula can be found.
\end{assumption}
So far the introduced model is a time-continuous model for the constituent and index processes, meaning that the copula $C(t, \cdot)$ is also defined on a time continuum $[0, \infty)$. The data from the market for $S_n(t)$ and $I(t)$ to calibrate the model is generally only available for a discrete set of market maturities $T_E = \{t_1,t_2,\dots,t_E\}$. This means that we can only calibrate an admissible copula to this discrete set of time points and an interpolation/extrapolation in time after calibrating the model. The chosen time interpolation then determines the exact shape of the copula on the time continuum. Such a time interpolation can be done on the implied volatilities and is standard practice (see for instance \cite{Oosterlee_Grzelak_2020,andreasen2011volatility}). This approach, however, has an impact on our model. The copula, which is defined on a time-continuous spectrum, is thus fully determined by its value at the discrete market maturities $T_E$. This means that the following additional assumption is made:
 \begin{assumption}[Time interpolation] If $C_1$ and $C_2$ are two copulae, we have
\[C_1(t, \cdot) = C_2(t,\cdot), \; \forall t \in [0,\infty) \iff C_1(t,\cdot) = C_2(t,\cdot), \;  \forall t \in T_E.\]
 \end{assumption}
\subsection{Intricate Dependency Structures}
Before diving further into the copula model, let us first investigate why choosing a simple dependency structure, such as in the correlated local volatility model (cLVM) is not sufficient to price index and basket options. The dependence of the constituents in a cLVM is directly modeled by correlating the Brownian motions in \Cref{eq:lvm}. This means that for any distinct pair $n, m \in N_I$, we have
\begin{equation}
\label{eq:correlation}
\d W_n(t) \d W_m(t) = \rho_{n,m} \d t, \end{equation}
where $\rho_{n,m} \in [-1,1]$ is the correlation\footnote{Note that the correlation matrix has to be positive definite.} between the Brownian motion. The price processes are thus driven by correlated Brownian motions with correlation $\rho_{n,m}$, and we can price index options with a Monte-Carlo pricer by simulating $S_n(t)$ for each $n$. 

We observe now, when calibrating the correlations to the index option prices, that the fitted correlation coefficients of the constituents $\rho_{n,m}$ are not equal when calibration to options of different strikes \cite{Branger_2004}. It appears as if the market is assuming different correlation coefficients of the underlying assets when considering different strikes. This observation is referred to as ``implied correlation", which, similar to implied volatility, describes the strike-dependent correlation coefficient observed in the market. In practice, the existence of implied correlation means that if a model is calibrated to ATM prices of the index option, non-ATM options will be mispriced by the model. 

Just as with the existence of an implied volatility skew, the existence of implied correlation is not due to irrationality in the market, but rather due to the model's inability to capture all the complex effects present in the market. It turns out that the dependence structure of \Cref{eq:correlation}, which is solely determined by the correlation coefficient, induces a \emph{linear} relationship between the assets. This linear relationship means that the dependency of the assets does not depend on the absolute values. This is insufficient to model dependencies for constituents. On one hand, analysis of historical correlations shows that correlation is stronger for negative returns than for positive returns \cite[Figure 1]{Zetocha_2015}. Such a structure is also expected by the market, as the market-implied expectation of the correlation often exhibits a skew or smile \cite[Figure 2]{Zetocha_2015}. 

We visualize the impact of the difference in dependency structures in \Cref{fig:corr-copula}. The dependence between two assets at a fixed time $t$ can be visualized with a scatter plot. While the Pearson's correlation coefficient is $0.5$ in both cases, the left shows a non-linear dependency structure. The skewed distribution exhibits a strong left-tail correlation, meaning that the correlation is stronger for lower values.
\begin{figure}[H]
    \centering
    \includegraphics[width=\textwidth]{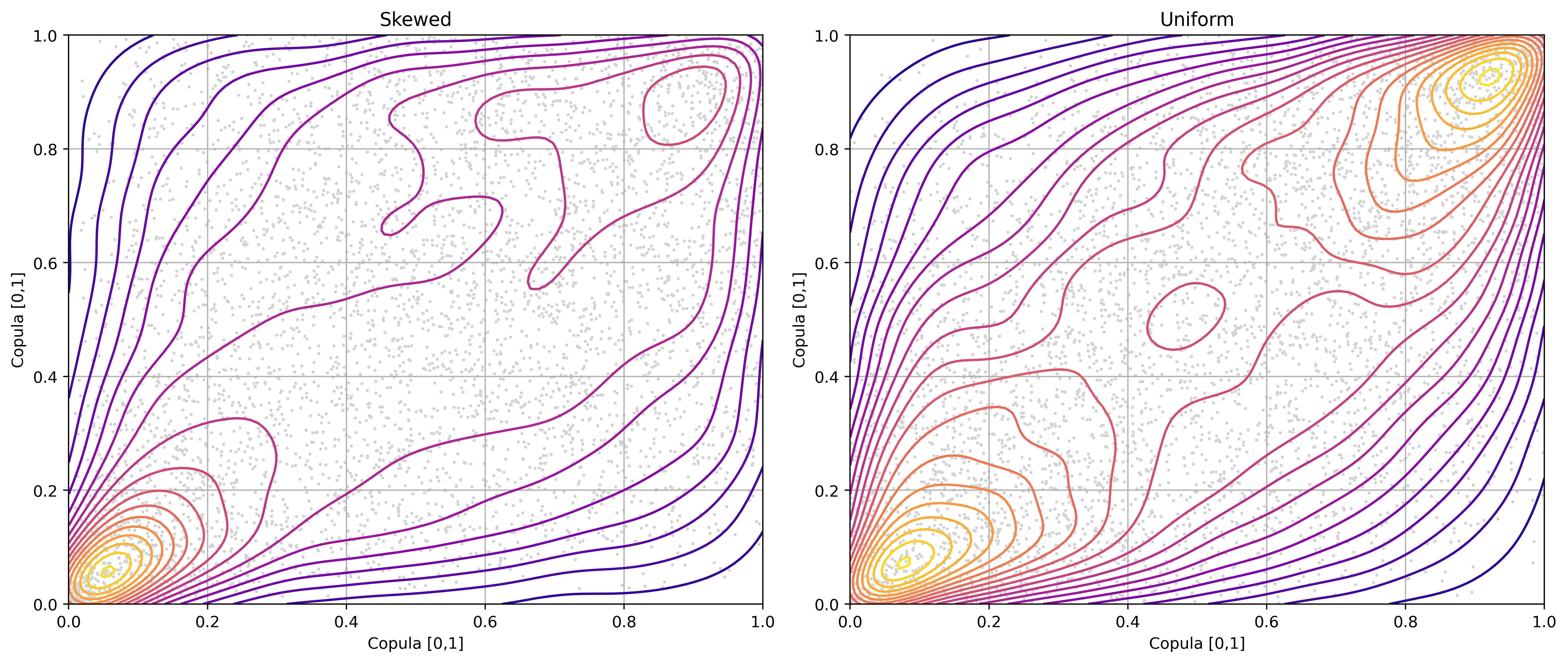}
    \caption{Example of a bivariate copula, visualized. Left: Stronger correlation for low values. Right: Uniform correlation. In both cases, the Pearson correlation is $0.5$.}
    \label{fig:corr-copula}
\end{figure}
It is not obvious at first sight why the exact structure of the correlation matters for pricing options. The impact of the structure is best observed by examining the generated PDF of the index $I(t)$ under uniform correlation vs non-uniform correlation. \Cref{fig:unifromvsNonun} displays the distribution of the sum of the random variables in \Cref{fig:corr-copula} and the impact on implied volatility skews given these correlation structures. The non-uniform correlation structure with a stronger lower-tail correlation (\Cref{fig:corr-copula}, left) will exhibit a broader lower tail, resulting in a skew in the implied volatility.
\begin{figure}[H]
    \centering
    \includegraphics[width=\textwidth]{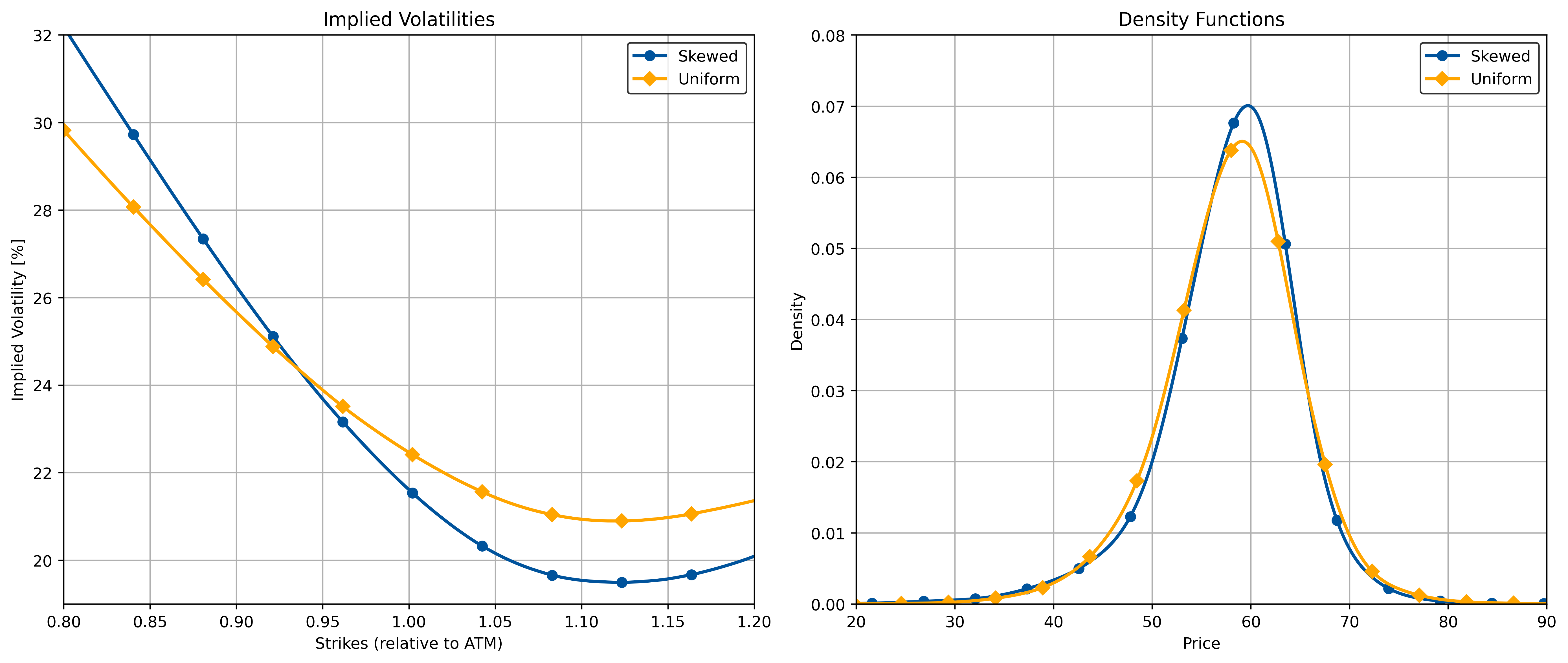}
    \caption{Effect of difference in copula on basket pricing: We plot the PDFs and the implied volatility skew of the sum of two random variables as in \Cref{fig:corr-copula}. The copula with a strong left tail exhibits a larger implied volatility skew. }
    \label{fig:unifromvsNonun}
\end{figure}
Given that the market expects a non-linear correlation structure, it is therefore crucial to have a model that is flexible enough to account for this expectation to reprice the index options. Copula models allow this flexibility since copulae can describe any type of dependency between $N$ random variables. 

The obvious way to utilize a copula model in practice is to specify an admissible copula $C$, which then defines the distribution of $I(t)$ and $B(t)$ and any time $t$. Such an approach was described for instance by \citet{Lucic_2012} using product copulae. The main issue with such an approach is the calibration of the copula. It turns out it is increasingly difficult to specify a copula for a high number of constituents, and often, a brute-force search algorithm is required to find the right parameters.

For this reason, rearrangement algorithms are used as an alternative to describe the dependency. Instead of defining an explicit copula $C$ which is suitable to model the correlation, rearrangement algorithms implicitly extract the dependency structure and generate samples for the random vector (\ref{eq:model}) under an appropriate admissible copula. These samples are sufficient to calibrate \Cref{eq:lvm}, since the cumulative distribution function $F_{B(t)}$ can be inferred directly from these samples. The exact workings of rearrangement algorithms will be described in \Cref{sec:Implied}.
\subsection{Uniqueness of Admissible Copula}
\label{sec:unique}
We defined an admissible copula $C$ to be any copula for which \Cref{eq:rnd} is given. The condition ensures that the model captures the market expectation of the copula $C$ given the information available from the index option market. Since the condition (\ref{eq:rnd}) can be fulfilled for more than one copula $C$, we should consider the implications of multiple distinct admissible copulae. At first sight, the uniqueness might not seem important, as all admissible copulae lead to the same index probability distribution, and thus also to the same index option prices. However, while the index option prices are the same for a model with any admissible copula, the resulting basket probability distributions given a basket subset $N_B$ are not necessarily consistent. This means that two admissible copulae can imply different basket probability distributions for the subset of constituents. The source of this issue is in the fact that $F_{I^\text{Mkt}(t)}$ only provides partial information about the dependency of the constituent random variables. We refer to this as the problem of \emph{underdetermination} of copula $C$. To illustrate the problem, let us examine a toy model of three uniformly distributed assets.
\begin{example}
\label{ex:3assets}
Let $t \in [0,\infty)$ be fixed and let $S_1(t),S_2(t),S_3(t)$ be marginally uniformly distributed on $[0,1]$. Suppose in a first case that $S_1(t)=S_2(t)$ and $S_3(t)$ is independent of them. In this case, the copula $C_1(t)$ is given by
\[C_1(t,u_1,u_2,u_3) = \Q(S_1(t) \leq \max(u_1,u_2)) \Q(S_3(t)\leq u_3) =   \max(u_1,u_2) u_3.\]
Suppose now that  $S_1(t)=S_3(t)$ and $S_2(t)$ is independent of them. In this case
\[C_2(t,u_1,u_2,u_3) = \Q(S_1(t)\leq \max(u_1,u_3)) \Q(S_2(t)\leq u_2) =   \max(u_1,u_3) u_2.\]
The copula $C_1$ and $C_2$ are not equal to each other, but it is clear that in either case, the distribution of $S_1(t)+S_2(t)+S_3(t)$ is $2U_1 + U_2$, where $U_1,U_2$ are independent uniformly distributed random variables. The copula $C_1$ is thus admissible if and only if $C_2$ is admissible. We run into an issue when considering a basket $N_B = \{1,2\}$, as the distribution of $S_1(t)+S_2(t)$ is not the same under $C_1$ and $C_2$. In the first case, we have
\[B(t) = 2U_1,\]
while in the second case,
\[B(t) = U_1 + U_2.\]
Pricing a derivative on $B(t)$ will yield different results in the cases.
\end{example}
The example clearly shows that an admissible copula is not unique, and the distinct copulae lead to different dynamics of $B(t)$. This poses the question of how to determine which copula is the ``superior" one, yielding the ``right" basket option price. Given the lack of additional information on the dependency between the constituents, a reasonable approach is to select a copula with a symmetry property. 
\begin{dfn}[Symmetric Copula]
Let $C \colon [0,\infty) \times [0,1]^N$ be a copula. Suppose that $\pi \colon N_I \to N_I$ is a permutation of the vector indices, and $P_\pi \colon [0,1]^N \to [0,1]^N$ the function which rearranges a vector according to $\pi$, i.e
\[P_\pi(u_1,u_2,\dots,u_N) = \left(u_{\pi(1)},u_{\pi(2)},\dots,u_{\pi(N)}\right).\]
The copula $C$ is called symmetric iff:
\[C(t,u_1,u_2,\dots,u_N)=C(t,P_\pi(u_1,u_2,\dots,u_N)),\]
for all $t \in  [0,\infty)$. 
\end{dfn}
A symmetric copula is a copula where the dependency structure between any two distinct constituents is the same. In other words, we do not favor the dependency between any pair within the constituents of the index, since we do not have any information on such dependency. This is in contrast to the two copulae chosen \Cref{ex:3assets}, where $S_1(t)$ and $S_2(t)$ are strongly dependent in the first case, while $S_3(t)$ is independent of them (and vice-versa in Case 2).  
\begin{assumption}
    We consider an admissible copula $C \in \Theta_A$ as ``valid" if the copula is symmetric. We denote the set of all symmetric copulae as $\Theta_A^S$.
\end{assumption}
Note that most basket option models implicitly only consider symmetric copulae by construction, although such terminology is not explicitly introduced (for instance \cite{Grzelak_Jablecki_Gatarek_2023, Lucic_2012, Zetocha_2015}). The concept of symmetry is also discussed in \citet{bernard2021model} in the context of ``entropy".
\section{Rearrangement Algorithms}
\label{sec:Implied}
In the preceding section, we defined the copula model for a multivariate process for constituents of an index or basket. We demonstrated that the difficulty of such a model is the choice of the copula $C$, such that $(S_1(t), S_2(t), \dots, S_N(t))$ has a joint probability distribution with an admissible and symmetric copula. In the remaining section, we aim to utilize the theoretical model and convert it into a practical pricing framework for any applicable payoff on $B(t)$. In particular, we aim to calibrate the ``direct" local volatility model for $B(t)$ given by \Cref{eq:directModelvol}. 

Rather than directly specifying a copula for the multivariate model, we will extract the dependency structure of the multivariate model by using a rearrangement sampling algorithm. This sampling algorithm will provide samples for the constituents given the marginal distribution $F_{S_n}(t)$ of the constituents, and the index CDF $F_I(t)$. The calibration of the dynamic of $B(t)$ is then achieved by estimating the probability density function $f_B(t)$ from the samples and approximating the probability distribution using the empirical distribution. \Cref{fig:flowchart} shows a complete overview of the process.
\begin{figure}[H]
    \centering
    \includegraphics[width=\linewidth]{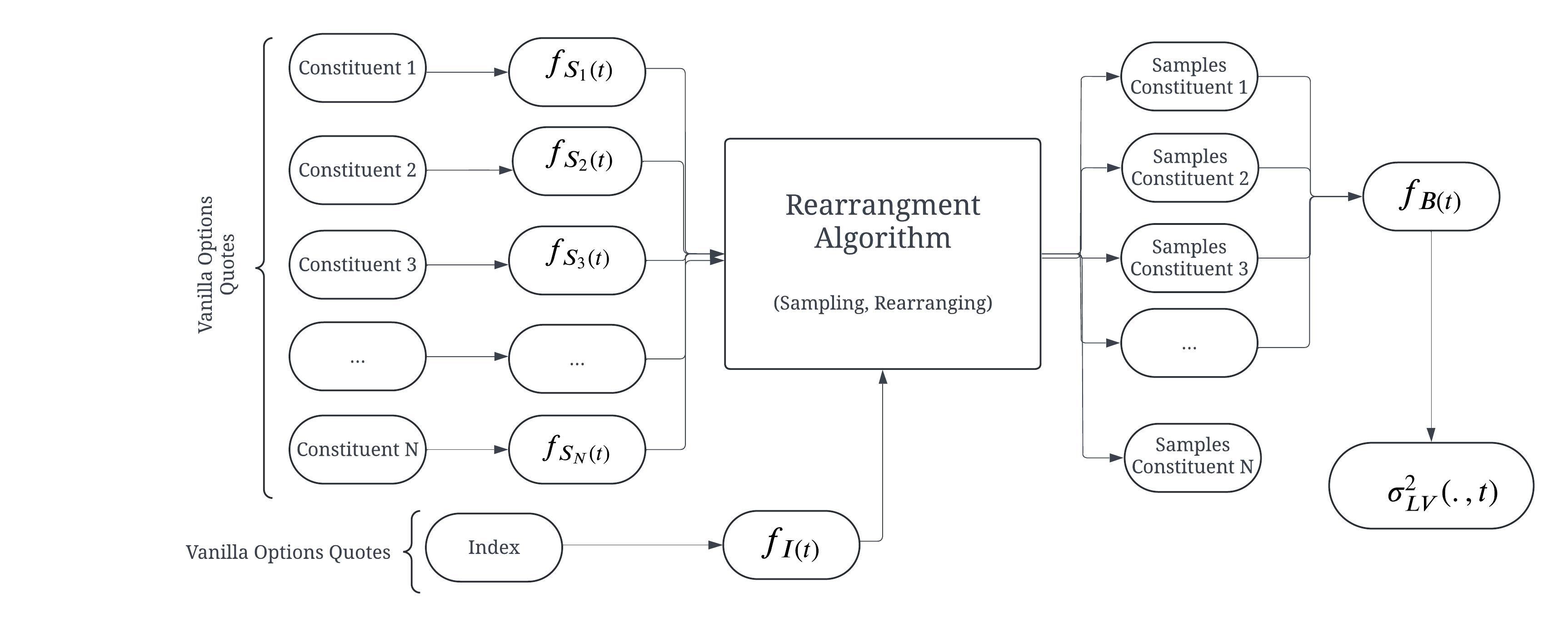}
    \caption{Rearrangement algorithm for basket options: Shows how the rearrangement algorithm is used to calibrate the local volatility model. In the first step, the constituents' option quotes are converted to risk-neutral PDFs. Similarly, we obtain the index PDF from the index option quotes. The rearrangement algorithm then generates samples for the constituents. These samples are used to estimate $f_{B(t)}$, which in turn is used to calibrate the local volatility model of $B(t)$.}
    \label{fig:flowchart}
\end{figure}
The sampling strategies we will consider involve rearrangement algorithms. In this section, we will formally introduce a generic rearrangement algorithm and demonstrate how it can be utilized to generate the necessary samples. Subsequently, in \Cref{sec:algorithm}, we will introduce a specific algorithm designed for calibrating basket models.
\begin{rem}
\label{rem:path-dep}
    The rearrangement algorithm provides samples for $B(t)$ for every $t \in T_E$. We can use these samples to price payoffs on $B(t)$ which only depend on the PDF of $B(t)$ at a single $t \in T_E$, such as vanilla European options. Any path-dependent payoff requires a stochastic model of $B(t)$. The local volatility model of \Cref{eq:directModelvol} is a natural choice given that the local variance function $\sigma^2_{LV}$ can be estimated from the static samples.
\end{rem}
\subsection{General Sample Notation}
We begin by introducing notation for collections of samples for constituents, baskets, and indices. Suppose that, for an asset $S_n(t)$ at some time $t \in T_E$, we draw $M$ samples. We denote the obtained samples as
\begin{equation}
\label{eq:sampleset}
s_n(t) = \left [s^1_n(t),s^2_n(t), \dots s_n^M(t)\right]^T \in \R_+^M.
\end{equation} 
We call the vector $s_n(t)$ a \emph{constituent sample vector} if the asset $S_n(t)$ belongs to an index. Since an index consists of multiple constituents, we require multiple constituent sample vectors to construct samples for an index. These constituent sample vectors are then collected an \emph{index sample matrix}\footnote{We will often refer to it simply as ``sample matrix''. }, denoted as the collection $[s]_M$,
\begin{equation}
    [s]_M= \left[s_n(t) \colon n = 1,2 ,\dots,N \right] = \begin{bmatrix}
s_1^1(t) & s_2^1(t) & \cdots & s_N^1(t) \\
s_1^2(t) & s_2^2(t) & \cdots & s_N^2(t) \\
\vdots & \vdots & \ddots & \vdots \\
s_1^M(t) & s_2^M(t) & \cdots & s_N^M(t)
\end{bmatrix}
\in \R_+^{M\times N} .
\end{equation}
The index sample matrix thus contains all sample vectors from all the constituents. Since the index $I(t) = \sum_{n \leq N}S_n(t)$ is a sum of all the assets in the index, we can define the samples of the index as the sum of the constituent samples. We define $i(t)$ as the sum of the constituent sample vectors:
\begin{equation}
     i(t)  = \sum_{n \leq N }s_n(t) \in \R_+^M.
\end{equation} 
In the same fashion, we find the basket samples for an arbitrary basket $N_B$ by summing up the corresponding constituent samples:
 \begin{equation}
    b(t) =  \sum_{n \in N_B }s_n(t) \in \R_+^M .
\end{equation} 
Lastly, we introduce the empirical distributions given a sample matrix $[s]_M$. Empirical distribution functions count the number of samples observed in a certain region given a sample matrix. For individual constituents, we define the marginal empirical distribution as
\begin{equation}
    \hat{F}_{[s]_M;n}(x) = \frac{1}{M}|\{ m \leq M \colon s_n^m(t) \leq x\}|, \hspace{0.3cm} x \in \R_+,
\end{equation}
where $|A|$ denotes the cardinality of a set $A$.
Furthermore, the empirical distribution for the basket and index is defined as
\begin{equation}
    \hat{F}_{[s]_M;B}(x) = \frac{1}{M}|\{ m \leq M \colon b^m(t) \leq x\}|, \hspace{0.3cm} x \in \R_+,
\end{equation}
and respectively
\begin{equation}
    \hat{F}_{[s]_M;I}(x) = \frac{1}{M}|\{ m \leq M \colon i^m(t) \leq x\}|, \hspace{0.3cm} x \in \R_+.
\end{equation}
We can also generalize the concept of a copula into its empirical counterpart. 
The empirical copula counts all the samples that lie below a certain boundary in the $[0,1]^N$ sample space.
\begin{align}
    \hat{C}_{[s]_M}(u)=\frac{1}{M}\left|\{ m \leq M \colon \left(F_{S_1(t)}(s^m_1(t)), F_{S_2(t)}(s^m_2(t)), \dots ,F_{S_N(t)}(s^m_N(t))\right) \in A(u)\}\right|,
\end{align}
with $u \in [0,1]^N$ and $A(u) = [0,u_1] \times [0,u_2] \times \dots \times[0,u_N]$.

Having defined the empirical copula, it is possible to define the symmetric property of a sample matrix. We call a matrix $[s]_M$ a symmetric sample matrix if $\hat{C}_{[s]_M}(u) = \hat{C}_{[s]_M}(P_\pi(u))$ for any permutation $P_\pi$.
\subsection{General Rearrangement Algorithms}
Let the time $t \in T_E$ be fixed. A rearrangement algorithms are used to construct multivariate samples of a random vector, for which:
\begin{enumerate}[i)]
\itemsep0em 
    \item The marginal probability distributions $F_{S_n(t)}$ are given,
    \item The probability distribution $F_{I(t)}$ of the sum $I(t) = \sum_{n=1}^N S_n(t)$ is given.
\end{enumerate}
Let $\Sigma$ denote the collection of the above involved (input) probability distributions. We define a rearrangement algorithm as a map $\mathbf{A}(M; \Sigma, \omega) = [s]_M^*$, which produces a sample matrix $[s]_M^*$ of size $M$ given the sampling outcome $\omega \in \Omega$. The output $[s]_M^*$ of the rearrangement algorithm should have the following properties: Let $\norm{\cdot}_\infty$ be the $L_\infty$-norm and let $\epsilon > 0$ be fixed. Then,
\begin{equation}
    \label{eq:cond1}
    \norm{\hat{F}_{[s]_M^*;n} - F_{S_n(t)}}_\infty < \epsilon,
\end{equation}
for all $n \leq N$. 
\begin{equation}
    \label{eq:cond2}
    \norm{\hat{F}_{[s]_M^*; I} - F_{I(t)}}_\infty < \epsilon.
\end{equation}
In this case, we refer to $[s]_M^*$ as an \emph{$\epsilon$-admissible sample matrix}. 

A general rearrangement algorithm $\mathbf{A}$ works based on the following principle: 
the algorithm initializes a sample matrix $[s]^\text{init}_M=[s]^\text{init}_M(\omega) $ of the $N$ constituents given a time $t\in T_E$. This is accomplished by drawing samples independently according to the marginal distribution $F_{S_n(t)}$ for each $n \leq N$. Such a sampling procedure is relatively cheap since no dependency between the random variables is assumed. Subsequently, the algorithm rearranges the order of the sample vectors $s^m_n(t)$ for each constituent $s_n(t)$ to obtain a new sample matrix $[\bar{s}]_M$. The rearrangement affects the sample vector $i(t)$ and therefore the empirical distribution $\hat{F}_{[\bar{s}]_M; I}$, while leaving the empirical distribution $\hat{F}_{[\bar{s}]_M; n}$ for the constituent $n$ unchanged. The ordering per constituent $s_n(t)$ is changed until we find $[s]_M^*$ such that the empirical distribution converges to $F_{I(t)}$, which means that condition (\ref{eq:cond2}) is satisfied. Since the empirical marginal distributions $\hat{F}_{[s]_M^*; n}$ are unchanged, the condition (\ref{eq:cond1}) is satisfied too. In the next subsection, we will formalize the procedure of rearrangement. 
\subsection{Rearranging the constituent sample vector}
\label{sec:arranging}
Suppose that we obtain a sample matrix $[s]_M^\text{init}$, drawn from the marginal distributions of $S_1(t),S_2(t),\dots ,S_N(t)$ for some time $t\in T_E$, independently of each other. For any $n$, the sample vector is the $M$-dimensional vector
\begin{equation}
    s_n(t) = \left[s_n^1(t),s_n^2(t), \dots,s_n^M(t)\right]^T \in \R_+^M.
\end{equation}
To alter the order of the vector, we will define a permutation on the indices of the elements. A permutation is defined as a bijective map $\bar{\pi}\colon \N_M  \to \N_M $ on the set $\N_M = \{1,2,3,\dots,M\}$. We can apply the permutation to the sample order to obtain a new constituent sample vector
\begin{equation}
    \Bar{s}_n(t) = \left[s_n^{\bar{\pi}(1)}(t),s_n^{\bar{\pi}(2)}(t), \dots,s_n^{\bar{\pi}(M)}(t)\right]^T \in \R_+^M.
\end{equation}
To simplify the notation of the permutation $\bar{\pi}$ when applying to a sample vector $s_n(t)$, we will use notation $\pi \colon \R^m_+ \to \R^m_+ $ to denote
\begin{equation}
    \pi(s_n(t)) =  \left[s_n^{\bar{\pi}(1)}(t),s_n^{\bar{\pi}(2)}(t), \dots,s_n^{\bar{\pi}(M)}(t)\right]^T.
\end{equation}
Since we can apply permutations to every constituent sample vector in a sample matrix, we use $N$ permutations $\pi_1,\pi_2,\dots,\pi_N$ to obtain a new sample matrix 
\begin{equation}
\label{eq:perSet}
    [\bar{s}]_M = [ \pi_n(s_n(t)) : n \leq N].
\end{equation}
From the initial sample matrix $[s]_M^\text{init}$, we can define a finite\footnote{The collection is finite since the number of unique permutations on $\N_M$ is finite} collection of sample matrices by applying all possible permutations on the set. We define
\begin{equation}
    \M  =\M([s]_M^\text{init}, \omega) =  \{  [\bar{s}]_M : \pi_1,\pi_2,\dots,\pi_N \text{ are permutations on }[s]_M^\text{init}(\omega) \},
\end{equation}
with $[\bar{s}]_M$ as in \Cref{eq:perSet}.
\begin{example}
For example, consider the sample matrix $[s]_M$ shown in \Cref{fig:perm}, where the second column on the right-hand side is shifted up by 1. This means that $\bar{\pi}_2(x) = x+1 \mod M$ and $\bar{\pi}_n = id$ for all $n \neq 2$.
\begin{table}[H]
\centering
  \begin{subfigure}{0.43\linewidth}
    \centering
    \begin{tabular}{c|>{\columncolor{myblue!10}}c|c|c|c|c|c|}
      \multicolumn{1}{c}{} & \multicolumn{5}{c}{Constituents}\\
      \cline{2-6}
      \multirow{5}{*}{\rotatebox[origin=c]{90}{\parbox[c]{2cm}{\centering Samples}}}&  &\cellcolor{myblue!10}1&\cellcolor{myblue!10}2&\cellcolor{myblue!10}\dots &\cellcolor{myblue!10}N\\
      \cline{2-6}
      & 1& \(s_{1}^1\) & \(s_{2}^1\) & \(\ldots\) & \(s_{N}^1\) \\
      \cline{2-6}
      &2&\(s_{1}^2\) & \(s_{2}^2\) & \(\ldots\) & \(s_{N}^2\) \\
       \cline{2-6}
     & \vdots &\(\vdots\) & \(\vdots\) & \(\ddots\) & \(\vdots\) \\
       \cline{2-6}
     & M &\(s_{1}^M\) & \(s_{2}^M\) & \(\ldots\) & \(s_{N}^M\) \\
       \cline{2-6}
    \end{tabular}
    \caption{Sample matrix $[s]_M$}
  \end{subfigure}
  \begin{subfigure}{0.43\linewidth}
    \centering
    \begin{tabular}{c|>{\columncolor{myblue!10}}c|c|c|c|c|c|}
      \multicolumn{1}{c}{} & \multicolumn{5}{c}{Constituents}\\
      \cline{2-6}
      \multirow{5}{*}{\rotatebox[origin=c]{90}{\parbox[c]{2cm}{\centering Samples}}}&  &\cellcolor{myblue!10}1&\cellcolor{myblue!10}2&\cellcolor{myblue!10}\dots &\cellcolor{myblue!10}N\\
      \cline{2-6}
      &1&\(s_{1}^1\) & \cellcolor{red!30}\(s_{2}^2\) & \(\ldots\) & \(s_{N}^1\) \\
      \cline{2-6}
      &2&\(s_{1}^2\) &  \cellcolor{red!30}\(s_{2}^3\) & \(\ldots\) & \(s_{N}^2\) \\
      \cline{2-6}
      &\vdots&\(\vdots\) & \(\vdots\) & \(\ddots\) & \(\vdots\) \\
      \cline{2-6}
      &M&\(s_{1}^M\) &  \cellcolor{red!30}\(s_{2}^{1}\) & \(\ldots\) & \(s_{N}^M\) \\
      \cline{2-6}
    \end{tabular}
    \caption{Permutated matrix by shifting column 2}
  \end{subfigure}
  \caption{Example Permutation: In this permutation, only the second column $s_2(t)$ is shifted. Every sample index increases by 1 for this column. The other columns are unchanged.}
  \label{fig:perm}
\end{table}
\end{example}
\subsection{Optimization Formulation}
The aim of a rearrangement algorithms $\mathbf{A}$ is to construct an $\epsilon$-admissible sample. We will now formulate this in an optimization problem over the possible permutation. First, we define the objective function $L$ such that
\begin{equation}
L([s]_M) = \norm{\hat{F}_{[s]_M,I} - F_{I(t)}}_\infty  = \sup_{x \in \R} \left|\hat{F}_{[s]_M,I}(x) - F_{I(t)}(x) \right|.\end{equation}
Consequently, a rearrangement algorithm $\mathbf{A}$ is defined by the minimization of the objective function $L$ over the set $\M([s]_M^{\text{init}})$:
\begin{equation}
\label{eq:algorithm1}
    \mathbf{A}(M; \Sigma, \omega)  = \argmin_{[s]_M \in \M([s]_M^{\text{init}},\omega)} L([s]_M)= [s]_M^*.
\end{equation}
The above formulation aims to find the minimum of $L$ in the permutations. Previously we formulated the conditions of $[s]_M^*$ in terms of the arbitrarily small error $\epsilon$. There is thus a requirement for convergence of the algorithm, which is justified with the following lemma. Since the proof of the theorem is tedious, we leave it to the appendix. 
\begin{lem}
\label{lem:nonempty}
    Let $\epsilon > 0$ be fixed and denote $\M$ as the set of all permutations of an initial sample matrix $[s]_M^{\text{init}}$ of size $M$. There exists an $M$, such that $\M$ almost surely contains an $\epsilon$-admissible sample matrix.
\end{lem}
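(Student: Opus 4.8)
The plan is to split the $\epsilon$-admissibility requirement into its two defining conditions, \Cref{eq:cond1} and \Cref{eq:cond2}, and to observe that the first is automatic while the second carries the real content. Because a rearrangement only permutes entries within each column, every matrix in $\M$ shares the column multisets and hence the same marginal empirical distributions $\hat{F}_{[s]_M; n}$. Since the initial columns are drawn independently from $F_{S_n(t)}$, the Glivenko--Cantelli theorem gives $\norm{\hat{F}_{[s]_M^{\text{init}}; n} - F_{S_n(t)}}_\infty \to 0$ almost surely as $M \to \infty$ for each $n \le N$; taking $M$ past the almost surely finite threshold at which all $N$ of these sup-norms drop below $\epsilon$ makes \Cref{eq:cond1} hold for \emph{every} element of $\M$ at once. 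It therefore suffices to exhibit a single rearrangement whose index empirical distribution satisfies \Cref{eq:cond2}.

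To construct a target I would invoke \Cref{assump:1}: market efficiency guarantees an admissible copula $C \in \Theta_A$, i.e.\ a joint law for $(S_1(t),\dots,S_N(t))$ with the prescribed marginals whose sum is distributed as $F_{I(t)}$. Drawing $M$ independent rows $(Y_1^m,\dots,Y_N^m)$ from this law produces a reference matrix whose row sums $y^m = \sum_{n \le N} Y_n^m$ are i.i.d.\ with CDF $F_{I(t)}$; a second application of Glivenko--Cantelli brings the empirical CDF of the $y^m$ within $\epsilon/2$ of $F_{I(t)}$, almost surely, once $M$ is large. The reference columns carry the same marginals as the initial ones, so their order statistics agree with those of $[s]_M^{\text{init}}$ up to the Glivenko--Cantelli error.

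The rearrangement itself is \emph{rank matching}: for each column $n$ I define the permutation $\pi_n$ that sends the $r$-th order statistic of the initial column to the row in which the reference column attains its own $r$-th order statistic. This is a genuine permutation, it leaves the column multisets (and hence the marginals) untouched, and after applying it the rearranged row sum $\bar{\imath}^{\,m}$ uses, in each coordinate, the initial value of the \emph{same rank} as the reference value $Y_n^m$. Consequently $|\bar{\imath}^{\,m} - y^m| \le \sum_{n \le N} |s_n^{(r)} - Y_n^{(r)}|$ for the appropriate ranks $r$, and I would bound the gap $\norm{\hat{F}_{[\bar{s}]_M; I} - \hat{F}_{\text{ref}; I}}_\infty$ by this row-wise discrepancy together with the continuity of $F_{I(t)}$, finally combining with the previous paragraph through the triangle inequality to reach \Cref{eq:cond2}.

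The main obstacle is the uniform control of $|s_n^{(r)} - Y_n^{(r)}|$ over \emph{all} ranks $r$: this is uniform convergence of the empirical quantile functions of two independent samples from $F_{S_n(t)}$, which holds on any central band of ranks where the density is bounded below but degrades in the tails. I would resolve this by truncation, isolating a central rank window $[\alpha M, (1-\alpha) M]$ on which quantile convergence is uniform and choosing $\alpha$ small enough that the excluded extreme rows contribute at most $\epsilon$-order mass to the sum CDF, the tails being negligible because $F_{S_n(t)}$ and $F_{I(t)}$ are continuous. Finally, since the construction uses the auxiliary reference draw, I would work on the product of the initial and reference sampling spaces, establish the estimate almost surely there, and apply Fubini: for almost every initial outcome $\omega$ there is a reference draw for which the rank-matched matrix is admissible, and that matrix lies in $\M([s]_M^{\text{init}}, \omega)$. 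Intersecting the finitely many ``for $M$ large, almost surely'' events still leaves probability one, so any $M$ beyond the largest threshold furnishes an $\epsilon$-admissible matrix in $\M$.
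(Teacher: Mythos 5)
Your proposal is correct and follows essentially the same route as the paper's own proof: both invoke \Cref{assump:1} to obtain an auxiliary reference sample drawn from an admissible joint law, rank-match the independent initial columns to it (the paper's permutation $\pi_n = o_{s_n}^{-1}\circ o_{\bar{s}_n}$ is exactly your rank matching), and control all errors via the Glivenko--Cantelli theorem. The differences are only in technical execution: the paper bounds the gap between the two empirical sum-CDFs through a direct set-cardinality argument rather than your uniform order-statistic estimate with tail truncation, while your product-space/Fubini treatment of the auxiliary draw makes explicit a step the paper's proof leaves implicit.
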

\begin{proof}
A complete proof is found in \ref{app:proof}.
\end{proof}
The lemma proves that it is possible for the rearrangement algorithm to converge to a solution and that the error $L$ can be brought arbitrarily low by increasing the sample size. There is, however, still an important open question to consider: Are the multiple ways to arrange the samples to obtain $\epsilon$-admissibility? Remember that from \Cref{ex:3assets}, it is known that there is no unique copula of a random vector that can match the distribution function $F_{I(t)}$. It turns out that the algorithm suffers from the same problem of underdetermination as the copula model from \Cref{sec:ModelDescription}. It is possible to obtain two sample matrices $[s]_M$ and $[s]_M'$ with similar empirical index probability distributions, although the empirical copulae diverge drastically. This is a direct result of the convergence of empirical distributions to the cumulative probability distribution \cite{tucker1959generalization}, and the observations from the previous section. The implication of this observation is that the algorithm possibly generates samples for a different random vector $(S'_1(t),S'_2(t), \dots,S'_N(t))$, such that
\begin{equation}
    \sum_{n=1}^N S_n(t) \stackrel{d}{=} \sum_{n=1}^N S'_n(t)
\end{equation}
The issue of underdetermination can be tackled by adding penalization for undesired properties to the sample sets in the optimization. In \Cref{sec:ModelDescription} we define a ``valid" copula to be admissible and symmetric. Suppose we define a measure of sample symmetry $\mathcal{S}([s]_M) >0$\footnote{A symmetry measurement can be based on the empirical copula.}, such that $\mathcal{S}([s]_M) > \mathcal{S}([s]'_M)$ implies that $[s]_M$ closer to being symmetric. Then, we define the optimization of the algorithm as:
\begin{equation}
\label{eq:algorithm3}
    \mathbf{A}(M; \Sigma, \omega) = [s]_M^* = \argmin_{[s]_M \in \M([s]_M^{\text{init}})} L([s]_M) +  \frac{\lambda}{\mathcal{S}([s]_M)},
\end{equation}
where $\lambda > 0$ is a penalization coefficient. The rearrangement algorithm is thus valid if $[s]_M^* = \mathbf{A}(M, \Sigma, \omega)$ is $\epsilon$-admissible, i.e. converges to the market-implied probability distribution, and is symmetric. In this case, $[s]_M^*$ can be considered a sample set of a copula $C \in \Theta_A^S$.

\subsection{Rearrangement Algorithms for Basket Derivatives}
Rearrangement algorithms are well applicable in the context of basket derivatives. The marginal distributions for the rearrangement algorithm are the constituent distributions obtained from the local volatility models of (\ref{eq:lvm}) and the probability distribution from the sum is the index probability distribution. In this case, we can apply a rearrangement algorithm for any $t \in T_E$ to obtain samples of $(S_1(t), S_2(t), \dots, S_N(t))$ and thereby implicitly extracting a copula $C \in \Theta_A$. Since we require samples for all times $t \in T_E$, we create a collection of rearrangement algorithms.
\begin{equation}
    \mathbf{A}(M,t; \Sigma, \omega) = \{\mathbf{A}(M; \Sigma(t), \omega ) : t \in T_E \}
\end{equation}
Since the samples are created individually for each $t$ they are static, meaning that there is no time continuity in the samples. As explained in \Cref{rem:path-dep},  the pricing of path-dependent requires a calibration of a dynamic model for $B(t)$ as explained in \Cref{fig:flowchart}. We can, however, price European call and put options since their payoff depends only on the final time $t$ at expiry.
\begin{theorem}
\label{thm:1}
Let $[s]_M^*$ be an $\epsilon$-admissible sample matrix of constituents for time $t \in T_E$ and $\epsilon>0$. Furthermore, suppose that $F_{I(t)}$ and $\hat{F}_{[s]_M^*; I}$ agree outside an interval $[a,b] \subset \R^+$. Then, 
    \begin{equation}|V_\text{Call}^{MC}([s]_M^*;K,t) - V_\text{Call}^{\text{Market}}(K,t)| < \epsilon (b-a), \end{equation}
where $V_\text{Call}^{MC}([s]_M^*;K,t)$ is the Monte Carlo price of a European call option given the sample matrix $[s]_M^*$.
\end{theorem}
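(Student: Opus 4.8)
The plan is to reduce the comparison of the two call prices to a comparison of the two index distribution functions, which is exactly the quantity that the $\epsilon$-admissibility condition (\ref{eq:cond2}) controls. The bridge between them is the standard tail-integral (layer-cake) representation of a call payoff: for any random variable $X$ with distribution function $F_X$ one has $\E[(X-K)^+] = \int_K^\infty (1 - F_X(y))\,\dy$. First I would apply this identity in two guises. Applied to the true index $I(t)$, together with the admissibility identity $F_{I(t)} = F_{I^\text{Mkt}(t)}$ from (\ref{eq:rnd}), it gives $V_\text{Call}^\text{Market}(K,t) = e^{-rt}\int_K^\infty (1 - F_{I(t)}(y))\,\dy$. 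Applied to the empirical index distribution $\hat{F}_{[s]_M^*;I}$ (a step function, so the identity still holds verbatim), it reproduces the Monte Carlo estimator: since $1 - \hat{F}_{[s]_M^*;I}(y) = \frac{1}{M}\sum_{m=1}^M \1\{i^m(t) > y\}$, exchanging the finite sum with the integral yields $\frac{1}{M}\sum_{m=1}^M (i^m(t)-K)^+ = \int_K^\infty (1 - \hat{F}_{[s]_M^*;I}(y))\,\dy$, so that $V_\text{Call}^{MC}([s]_M^*;K,t) = e^{-rt}\int_K^\infty (1 - \hat{F}_{[s]_M^*;I}(y))\,\dy$.

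Subtracting the two expressions, the constant terms cancel and I am left with $V_\text{Call}^{MC} - V_\text{Call}^\text{Market} = e^{-rt}\int_K^\infty \bigl[F_{I(t)}(y) - \hat{F}_{[s]_M^*;I}(y)\bigr]\,\dy$. The next step exploits the support hypothesis: since $F_{I(t)}$ and $\hat{F}_{[s]_M^*;I}$ agree outside $[a,b]$, the integrand vanishes for $y \notin [a,b]$, so the domain collapses from $[K,\infty)$ to $[K,\infty)\cap[a,b] \subseteq [a,b]$, an interval of length at most $b-a$ (if $K \ge b$ the integral is empty and the claim is trivial). Bounding the integrand pointwise and using $e^{-rt} \le 1$ then gives $|V_\text{Call}^{MC} - V_\text{Call}^\text{Market}| \le (b-a)\,\norm{\hat{F}_{[s]_M^*;I} - F_{I(t)}}_\infty < (b-a)\,\epsilon$, where the final strict inequality is precisely condition (\ref{eq:cond2}) together with $b-a>0$.

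The conceptual crux, and the step I expect to require the most care, is the tail-integral representation itself, because it is what converts an expectation of the \emph{unbounded} payoff $(X-K)^+$ into a quantity governed by the $L_\infty$ control on distribution functions. A naive direct estimate of $|\E[(X-K)^+] - \E_\text{emp}[(X-K)^+]|$ fails, since uniform closeness of CDFs does not by itself control differences of expectations of unbounded functions; one would otherwise need a Wasserstein-type bound. It is exactly the assumption that the two distributions agree outside the bounded interval $[a,b]$ that tames the tail and turns the uniform ($L_\infty$) estimate into a usable finite ($L_1$-over-$[a,b]$) bound, by rendering the otherwise-divergent tail contribution identically zero. I would therefore state and use that hypothesis precisely; the remaining manipulations (the layer-cake identity, the domain restriction, and the discount factor $e^{-rt}\le 1$) are routine.
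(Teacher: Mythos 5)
Your proposal is correct and is essentially the paper's own argument: the paper writes both prices as Riemann--Stieltjes integrals $\int (x-K)^+\,\d F$, uses the agreement outside $[a,b]$ to restrict the domain, and integrates by parts to land on $\int_a^b |\hat{F}_{[s]_M^*;I}(x)-F_{I(t)}(x)|\,\dx < \epsilon(b-a)$ --- your tail-integral (layer-cake) representation is exactly that integration by parts carried out up front, so the two proofs coincide after the first line. Your explicit handling of the discount factor via $e^{-rt}\le 1$ and the remark on why the $[a,b]$ hypothesis is indispensable (an $L_\infty$ bound on CDFs alone cannot control expectations of unbounded payoffs) are accurate refinements of points the paper treats only in passing.
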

\begin{proof}
For simplicity, we will assume a constant interest rate of $r=0$. In the case of non-zero interest rates, a constant discounting factor has to be included. Since $\hat{F}_{[s]_M^*; I}$ is not differentiable, we use Riemann-Stieltjes integrals to express the pricing integrals. The market price of a call option with strike $K$ expiring at $t$ can be written as
    \begin{equation*}
        V^{\text{Market}}_\text{Call}(t,K) = \int_{-\infty}^\infty (x-K)^+ \d F_{I(t)}(x),
    \end{equation*}
    and the Monte-Carlo price as
        \begin{equation*}
      V_\text{Call}^{MC}([s]_M^*;K,t) = \sum_{j \leq M} \frac{\left(i^j(t) - K\right)^+}{M} = \int_{-\infty}^\infty (x-K)^+ \d \hat{F}_{[s]_M^*; I}(x),
    \end{equation*}
    where both integrals are Riemann-Stieltjes integrals. By the linearity of the Riemann-Stieltjes integrals, we have
      \begin{align*}
       |V_\text{Call}^{MC}([s]_M^*;K,t) - V_\text{Call}^{\text{Market}}(K,t)| &= \left|\int_{-\infty}^\infty (x-K)^+ \d \hat{F}_{[s]_M^*; I}(x) - \int_{-\infty}^\infty (x-K)^+ \d F_{I(t)}(x) \right|\\
       &= \left|\int_{-\infty}^\infty (x-K)^+ \d(\hat{F}_{[s]_M^*; I} - F_{I(t)})(x)\right|  \\
       &= \left|\int_{a}^b (x-K)^+ \d (\hat{F}_{[s]_M^*; I} - F_{I(t)})(x)\right|.
    \end{align*}  
    Since $\hat{F}_{[s]_M^*; I} - F_{I(t)}$ is zero at the boundaries, integration by parts yields 
          \begin{align*}
       \left|\int_{a}^b (x-K)^+ \d(\hat{F}_{[s]_M^*; I} - F_{I(t)})(x)\right| &\leq  \left|\int_{a}^b \hat{F}_{[s]_M^*; I}(x) - F_{I(t)}(x) \d x\right|\\
       &\leq \int_{a}^b \left| \hat{F}_{[s]_M^*; I}(x) - F_{I(t)}(x) \right| \d x\\&< \epsilon (b-a).
    \end{align*} 
The proof for put options works the same way. 
\end{proof}

\section{Example: Iterative Sort-Mix (ISM) algorithm \& Implementation Details}
\label{sec:algorithm}
 We will now present an implementation of a rearrangement algorithm $\mathbf{A}(M,t; \Sigma, \omega)$ for basket option pricing as defined in the previous section. For any time $t \in T_E$, the algorithm efficiently creates a sample matrix $[s]_M$ which is used to estimate the probability density function $f_{B(t)}$ and therefore allows us to price basket options. Note that, contrary to the previous section, the algorithm is not an optimization, and thus the pricing error cannot be reduced to an arbitrarily low value. The algorithm is nevertheless able to provide sufficiently accurate results. In the empirical section, we will demonstrate that ``sufficient" means that the algorithm performs well enough on empirical data to rapidly and accurately price basket options.

As is usual for a rearrangement algorithm, the first step of the algorithm is to generate the initial sample matrix $[s]_M^\text{init}$. This step is generic, yet a description will be provided in \Cref{sec:sampling}. The core part of the algorithm is the rearrangement of the initial matrix, which we will first explain in a heuristic way. In the following subsections, a detailed procedure is laid out and a pseudo-code example is provided in \ref{app:appendixa}. 

\subsection{Heuristic Explanation}
The core principle of the algorithm is not an optimization as suggested \Cref{eq:algorithm3}. Rather, the algorithm iteratively selects and stores a subset of the samples that are considered suitable, until all samples are selected out. To see how this concretely works, we first notice that there are two general ways to arrange the individual vectors $s_n(t)$: We can sort the array from the smallest sample to the largest sample. In this case, we have
\begin{equation}
    s_n^m(t) \leq s_n^l(t) \iff m \leq l, \hspace{0.5cm} \forall n \leq N.
\end{equation}
Alternatively, we can use a random number generator to arbitrarily mix the individual sample vectors. This is equivalent to applying permutations $\pi_n$ at random and obtaining $\pi(s_n(t))$ for each constituent $n$.

We now also note that in the first arrangement, when all vectors are sorted, the sample correlation between the individual vectors is high. For any pair $n,l \leq N$, we have that the Pearson sample correlation coefficient between $s_n(t)$ and $s_l(t)$ is close to 1. On the other hand, when all sample vectors are randomly ordered, the correlation is close to 0. We now arrange each column in the sample matrix $[s]_M^\text{init}$ in the two arrangements and each time plot the resulting empirical PDF of the index as a histogram. This is given in \Cref{fig:sorted-vs-unsorted}, where we also plot the PDF of the ``target PDF" which is the market-observed PDF of the index.
\begin{figure}[H]
    \centering
    \includegraphics[width=0.7\textwidth]{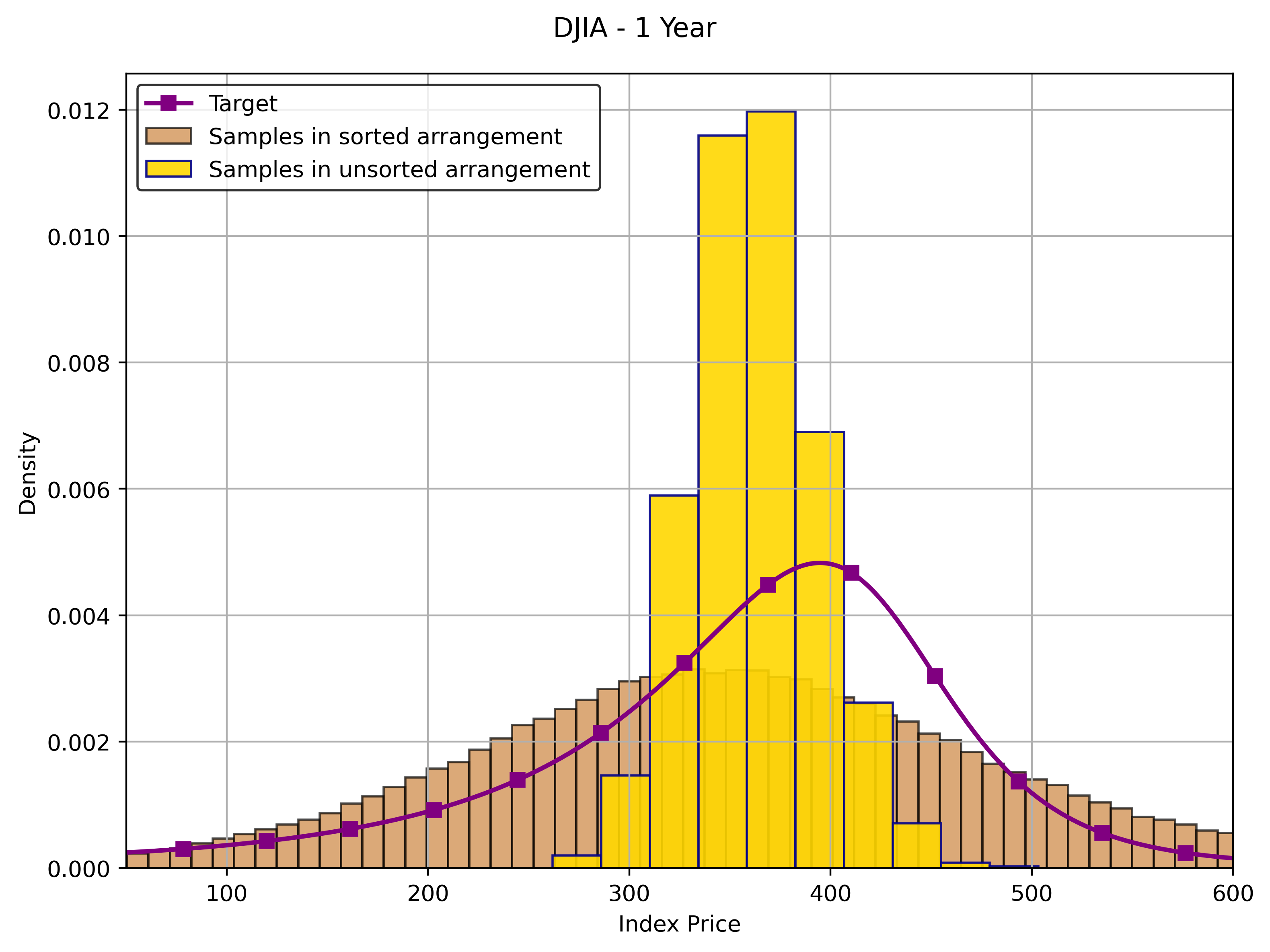}
    \caption{Index probability density function: The graph shows the empirical density of two sample sets of an index as a histogram. The yellow histogram shows the samples when they are in the ``mixed" arrangement. The brown histogram shows the ``sorted" arrangement, displaying the fat tails due to the strong correlation of the constituents - Samples based on data from DJIA 1 Year }
    \label{fig:sorted-vs-unsorted}
\end{figure}
The rationale behind the algorithm is straightforward: We want to create a superposition of sorted samples and mixed samples, such that the empirical probability density function matches the target PDF. To do so, we iteratively arrange the samples in the two arrangements, and store the samples which ``fit under the target PDF". Fitting under the target PDF means that the resulting empirical density function is locally below the target PDF. We then continue the iteration only with the samples that are not stored and choose the other arrangement. The algorithm always starts with the ``sort" arrangement. 

After only a few iterations, the algorithm can no longer make progress, and the remaining samples are added to the previously stored ones. This results in a sub-optimal empirical distribution compared to the target. As we will see, this error is not significant, and the function $f_{B(t)}$ can still be estimated effectively.

The resulting sample matrix $[s]_M^*$ is thus $\epsilon$-admissible up to the error $\epsilon$ of the remaining samples which cannot be fit. Furthermore, the samples in both sorted and mixed form are symmetric, which means that the resulting superposition of samples is also symmetric.

\Cref{fig:4steps} shows a graphical example of one iteration of the algorithm \footnote{the data is from \Cref{sec:marketData}}.
\begin{figure}[H]
    \centering
    \includegraphics[width=\textwidth]{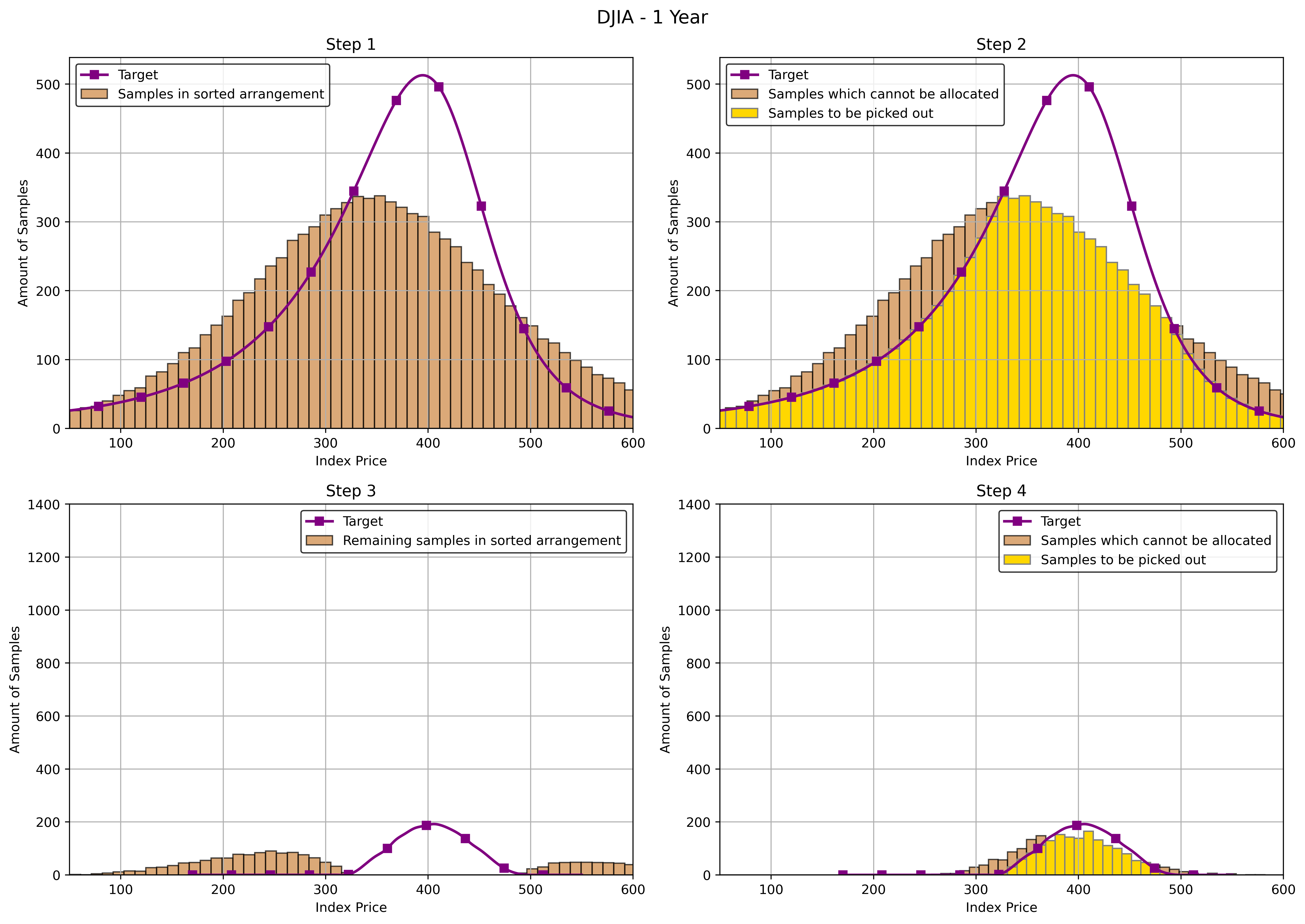}
    \caption{One iteration of sorting - mixing procedure: The algorithm starts with 10'000 samples. Step 1 (Top Left): Samples are ordered. Step 2 (Top Right): All valid samples (below target PDF) are removed from the sample matrix and stored away. Step 3 (Bottom Left): The target is updated, and the remaining samples are outside of the target range. Step 4 (Bottom Right): The remaining samples are mixed together and are now ready to be removed again. Only $\sim$ 400 samples are left after this iteration.}
    \label{fig:4steps}
\end{figure}
\subsection{Independent Constituent Sampling}
\label{sec:sampling}
The initial step consists of generating $M \in \N$ samples of the constituent prices according to their marginal distribution, to initiate the matrix $[s]_M^\text{init}$. This is achieved by generating uniform random variables and composing them with the inverse of the cumulative distribution function of the constituents. Let $t >0$ be some expiry time, and let $M$ be the number of required samples. Given the probability distributions $F_{S_n(t)}$ of $S_n(t)$, which we obtain from calibrating \Cref{eq:lvm} to the constituents' vanilla option prices, we generate the samples separately for each constituent. We will use the following mapping: 
\begin{equation}
\label{eq:estimation}
    S_n(t) \stackrel{d}{=} F_{S_n(t)}^{-1}\left(U_n\right),
\end{equation}
where $U_n, 1  \leq n \leq N$ are distinct and independent uniformly distributed random variables. Therefore, we obtain the samples $s_n(t)$ as
\begin{equation}
    s_n(t) = \left[s^1_n(t), s^2_n(t),\dots,s^M_n(t) \right]^T = \left[F^{-1}_{S_n(t)}(u_n^1), F^{-1}_{S_n(t)}(u^2_n),\dots,F^{-1}_{S_n(t)}(u^M_n )\right]^T ,
\end{equation}
where $u_n^1,u_n^1,\dots,u_n^M $ are the uniform samples.

\subsection{Range Discretization and Target Vector}
Before the algorithm can arrange the samples, we require a discretization of the involved PDFs. We aim to obtain a target vector $V \in \N^{k}$ with $k \in \N$, which specifies the expected number of samples in a subset of the range of $I(t)$ under the target PDF $f_{I(t)}$. 

We begin by defining a discretization of the range of the index. Since the range of $I(t)$ is $\R_+$ and therefore unbounded, we need to reduce it to an appropriately bounded set $[g_0,g_K] \subset \R_+$ to ensure numerical efficiency. We choose the bounds $g_0< g_K$ for the interval with the following interpretation: For $M$ samples from an admissible copula, we expect exactly one sample below $g_0$ and one sample above $g_k$. These values are given by the probability distribution function \begin{equation}
    g_0 = F^{-1}_{I(t)}\left(\frac{1}{M}\right)\hspace{0.5cm} \text{ and }\hspace{0.5cm} g_K = F^{-1}_{I(t)} \left(\frac{M-1}{M}\right).
\end{equation}
We then define an equidistant grid $g_0 < g_1 < g_2 < \dots < g_K$, which define $K$ bins $G_k = (g_{k-1}, g_k]$\footnote{For completeness, we extend $G_0$ with $g_0$ to $[g_0,g_1]$} to partition the interval \footnote{The parameter $K$ is a factor for the quality and computational complexity of the algorithm and should be chosen in relation to $M$, for instance $\frac{M}{K}=10$}. We thus have
\begin{equation}
    [g_0,g_K] = \bigcup_{0 \leq k \leq K} G_k.
\end{equation}
For each bin, we can now calculate the expected samples to fall into each bin under $F_{I(t)}$. This value is given by the cumulative distribution, rounded to an integer
\begin{equation}
    \Delta F_k =  \big \lceil M \;\Q( I(t) \in G_k)\big\rfloor =\big\lceil M \left(F_{I(t)}(g_k) - F_{I(t)}(g_{k-1})\right) \big\rfloor \in \N,
\end{equation}
where $ \lceil x \rfloor$ denotes $x$ rounded to its nearest integer. We capture the discrete PDF with the \emph{target vector} 
\begin{equation}
\label{eq:targetVector}
    V = \left[\Delta F_1, \Delta F_2 ,\dots,\Delta F_K\right] \in \N^K.
\end{equation}
Furthermore, we introduce a discretization of the loss function on the PDF, given by $\hat{\ell}$. For a sample matrix $[s]_M \in \M$, we define the bin count $c([s]_M)$ with
\begin{equation}
c_k([s]_M) = |\{1 \leq m \leq M : i^m \in G_k\}|,\end{equation}
with $1\leq k \leq K$. The discrete error is now defined as the difference between expected and actual discrete PDF:
\begin{equation}
    \hat{\ell}([s]_M) = \frac{\sum_{k=1}^K|c_k([s]_M) - \Delta F_k |}{2M} \in [0,1].
\end{equation}
We divide by two since a sample that cannot be matched to a bin will be counted once twice in the numerator. 
\subsection{Sorting - Mixing Procedure}
With the target vector $V$ defined and samples $s_n(t)$ initiated, we will now describe the details of the core of the algorithm, where the different sample arrangements are iterated. 
The main algorithm thus runs through iterations of first sorting and then mixing samples. Suppose we have a sample matrix $[s]_M$ with constituent sample vectors $s_n(t)$ for all $n \leq N$ and index vector $i(t) = \sum_{n=1}^N s_n(t)$. An iteration of the sorting-mixing procedure starts by arranging the constituent sample vectors in sorted order and obtaining the index sample vector $i(t)$. Now, for every bin $G_k$, we collect the vector positions $m$ \footnote{By ''vector positions" we mean the index $m$ of 
$i^m(t)$ of the vector $i(t)$. We refrain from calling it the ``index" of the vector due to the ambiguity with the index $I(t)$. } of the samples that have values $i^m(t)$ in $G_k$. We define the set of positions of the samples in $G_k$ as
\begin{equation}
    R_k = \{1 \leq m \leq M : i^m(t) \in G_k \}.
\end{equation}
The sample count $c_k$, which was defined previously, is thus given by $c_k = |R_k|$. We want to determine a set of positions $\Bar{R_k} \subset R_k$, which we can store away and remove from the sample matrix. The count $c_k$ does not match the expected count, given by $\Delta F_k$. We therefore want to remove \emph{at most} $\Delta F_k$ samples and we might have to reduce the set $R_k$ of samples to remove. If $c_k >  \Delta F_k$, we need to reduce the set $R_k$ to a smaller set by $\Bar{R}_k$. We do this by randomly selecting $\Delta F_k$ elements from $R_k$. If $c_k \leq  \Delta F_k$, all the samples from $R_k$ can be removed and we have $\Bar{R}_k = R_k$. In both cases, we now have
\begin{equation}
    \Bar{c}_k = |\Bar{R}_k|= \min(c_k, \Delta F_k).
\end{equation}
To visualize the involved quantities, we see that the height of the brown histogram in \Cref{fig:4steps} is given by $c_k$, while the height of the yellow columns is $\Bar{c}_k$. 

We can now store all samples with indices in $\Bar{R}_k$ and reduce the sample vectors. The updated constituent sample vectors $s_n(t)'$ contain all elements $ s_n^m(t) $ such that $ m \notin \bigcup_{k \leq K} \Bar{R_k}$. The length of $s_n(t)'$ is thus $M - \sum_{k \leq K} \Bar{c}_k$ for all constituents.

Lastly, the target vector needs to be updated too, since we no longer require that many samples per bin. Since we remove $\Bar{c}_k$ samples, we update as
\begin{equation}
    \Delta F_k' = \Delta F_k - \Bar{c}_k.
\end{equation}
To finish one iteration of the algorithm, we need to repeat this procedure in a random arrangement. We thus now ``shuffle'' the elements of $s_n(t)'$ and then repeat the derivation of $R_k,\Bar{R}_k$, et cetera, removing and storing more samples. After sorting and shuffling once, we conclude one iteration.  \Cref{fig:result1y} shows how the final PDF of the samples matches the target PDF.

\begin{figure}[H]
    \centering
    \includegraphics[width=\textwidth]{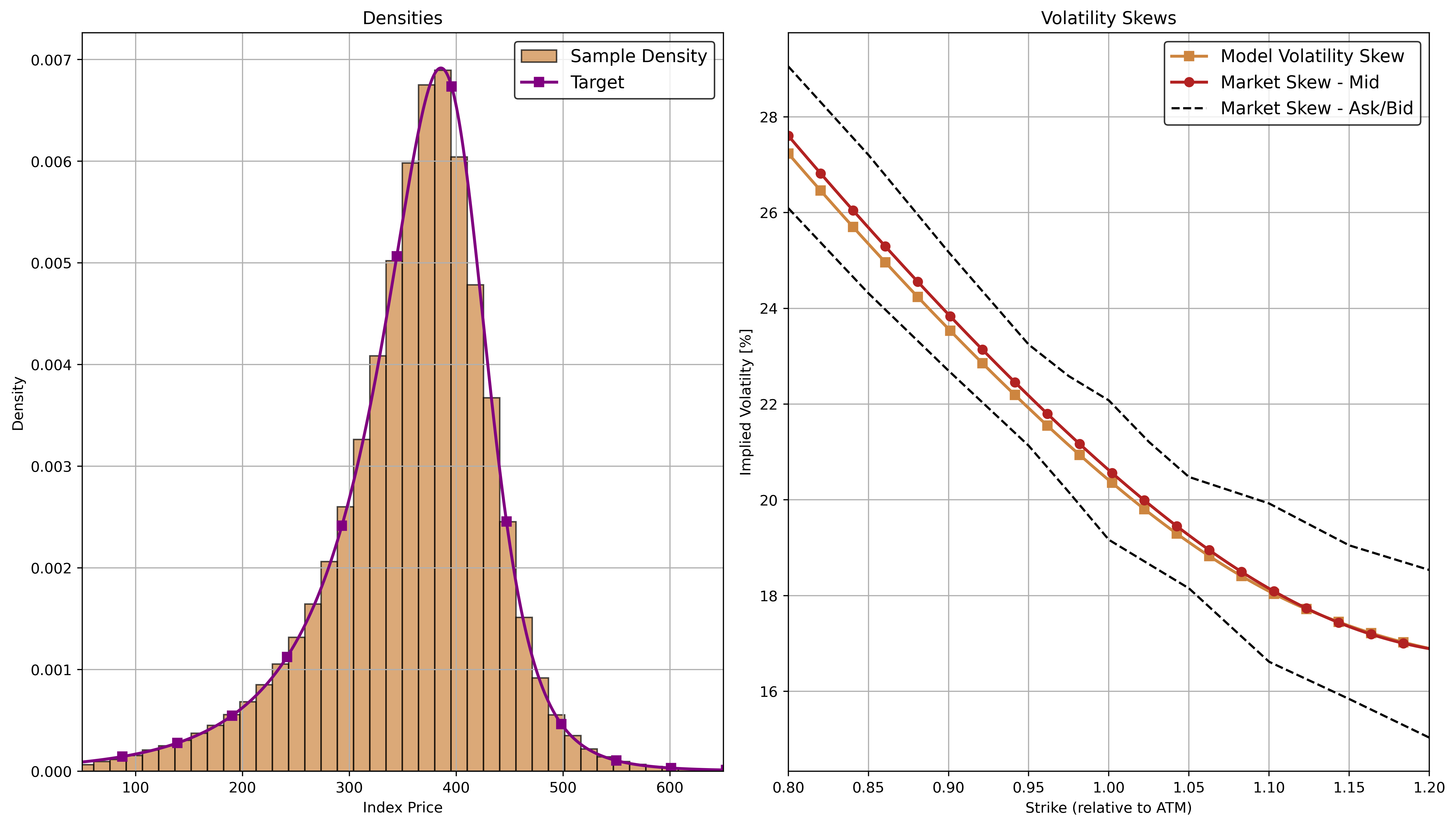}
    \caption{Final PDF after 5 iterations. The match of the empirical PDF and target PDF means that options are priced correctly. This can be seen on the implied volatility graph on the right (Example DJIV 1Y)}
    \label{fig:result1y}
\end{figure}
\begin{example}
We demonstrate the workings of the algorithm on a small dataset with 2 assets and 10 samples. The samples are marginally uniformly distributed on $[0,1]$. The index samples fall into $[0,2]$, which we split into $K=3$ equally large bins with boundaries $$ \left\{0,\frac{2}{3},\frac{4}{3},2 \right\}.$$
The bins are setup such that\footnote{Since the $[0,2]$ is already bounded, there is no need to obtain $g_0$ and $g_K$ by the quantiles. }
\begin{equation*}
    G_1 = \left[0,\frac{2}{3}\right], \quad G_2 = \left(\frac{2}{3},\frac{4}{3}\right], \quad G_3 = \left(\frac{4}{3},2\right].
\end{equation*}
Furthermore, suppose we obtain the from the index option quotes the following discrete probability density function for $I$:
\begin{equation*}
    f_{I}(x) = \begin{cases}
    0.3, \text{ if } x \in G_1,\\
    0.5, \text{ if } x \in G_2,\\
    0.2, \text{ if } x \in G_3.
    \end{cases}
\end{equation*}
We then calculate the target vector for $M=10$ samples as 
\begin{equation*}
    \Delta F_k = \lceil M \cdot \Q(I \in G_k) \rfloor, 
\quad k \in \{1,2,3\}
\end{equation*}
from which we obtain
\[V = \left[\Delta F_1,\Delta F_2,\Delta F_3 \right] = \left[3,5,2\right]. \]
\textbf{Algorithm:} We start the algorithm by creating samples for $S_1$ and $S_2$ independently. This step is shown in a) in \Cref{fig:exampleToyData}. Then, in b) we sort both column vectors smallest to largest and sum the columns to obtain the index $I$. For this arrangement we obtain 
 \[c = [c_1,c_2,c_3] = [4,3,3],\] which means that the amount of samples to be removed are
\[
     \bar c = \min(c,\Delta F) = [3,3,2].
\]
 We therefore select three samples from bin 1, three samples from bin 2 and two samples from bin 3, which is shown in c), indicated by the green color that a sample is picked out.  For bin 1 and 3 we select them at random since $c_{1/3} > \Delta F_{1/3}$. Since two samples cannot be selected, the loss at this stage is given by $\ell([s]^*_M) = \frac{2}{10}$. We update the target vectors according to $ \Delta F_k' = \Delta F_k - \Bar{c}_k$ to obtain 
 \[\Delta F_k' = [0,2,0].\]
 The remaining samples are now entering the mixing stage. In d), we mix the samples for $S_1$ and $S_2$ column-wise and aggregate to obtain an new value for $I$. Based on these values we obtain the new $c$ as $c = [0,2,0]$. Since this is exactly the remaining target vector, we select both new samples and the algorithm is finished. The remaining error is thus $\ell([s]_M^*) = 0$ and the selected samples can be used.
\begin{table}[H]
\centering
  \begin{subfigure}{0.45\textwidth}
  \centering
\begin{tabular}{|c c|}
    \hline
    \rowcolor{myblue!30} {$S_1$} & {$S_2$} \\
    \hline
    \hline
    0.29 & 0.31 \\
    0.14 & 0.47 \\
    0.26 & 0.17 \\
    0.44 & 0.07 \\
    0.05 & 0.01 \\
    1.00 & 0.69 \\
    0.31 & 0.83 \\
    0.76 & 0.49 \\
    0.72 & 0.41 \\
    0.04 & 0.76 \\
    \hline
  \end{tabular}
    \caption{Initial data sampled as two independent uniform random variables}
  \end{subfigure}
  \begin{subfigure}{0.45\textwidth}
    \centering
\begin{tabular}{ |c| c c | c |}
\hline
   \rowcolor{myblue!30} &$S_1$ & $S_2$ & $I$ \\
\hline
    \hline
\cellcolor{myblue!10} & 0.04 & 0.01 & 0.05 \\

  \cellcolor{myblue!10} & 0.05 & 0.07 & 0.12 \\

   \cellcolor{myblue!10}& 0.14 & 0.17 & 0.31 \\

   \multirow{-4}{*}{\cellcolor{myblue!10}\rotatebox[origin=c]{90}{$G_1$}}& 0.26 & 0.31 & 0.57 \\
  \hline
  \cellcolor{myblue!10} & 0.29 & 0.41 & 0.70 \\

   \cellcolor{myblue!10}  & 0.31 & 0.47 & 0.78 \\

    \multirow{-3}{*}{\cellcolor{myblue!10}\rotatebox[origin=c]{90}{$G_2$}}& 0.44 & 0.69 & 1.13 \\
  \hline
 \cellcolor{myblue!10}& 0.72 & 0.76 & 1.48 \\

    \cellcolor{myblue!10} & 0.76 & 0.83 & 1.59 \\

  \multirow{-3}{*}{\cellcolor{myblue!10}\rotatebox[origin=c]{90}{$G_3$}}& 1.00 & 0.76 & 1.76 \\
  \hline
\end{tabular}
    \caption{Sort data column-wise, compute $I = S_1 + S_2$ and split into bins based on index value}
  \end{subfigure}

  \bigskip
    \begin{subfigure}{0.45\textwidth}
    \centering
  \begin{tabular}{|c | c c | c| }
    \hline
    \rowcolor{myblue!30}&{$S_1$} & {$S_2$} & {$I$ }  \\
    \hline
    \hline
   \cellcolor{myblue!10}  &    \cellcolor{green!15}0.04 & \cellcolor{green!15}0.01&\cellcolor{green!15}0.05  \\
      
   \cellcolor{myblue!10}  &\cellcolor{green!15}0.05 & \cellcolor{green!15}0.07 &\cellcolor{green!15}0.12 \\
    \cellcolor{myblue!10} &\cellcolor{white!0}0.14 & 0.17 &\cellcolor{white!0}0.31 \\
    \multirow{-4}{*}{\cellcolor{myblue!10}\rotatebox[origin=c]{90}{$G_1$}}&\cellcolor{green!15}0.26 & \cellcolor{green!15}0.31 & \cellcolor{green!15}0.57 \\
    \hline 
     \cellcolor{myblue!10}&\cellcolor{green!15}0.29 &\cellcolor{green!15} 0.41 & \cellcolor{green!15}0.70\\
     \cellcolor{myblue!10}&\cellcolor{green!15}0.31 & \cellcolor{green!15}0.47 & \cellcolor{green!15}0.78\\
     \multirow{-3}{*}{\cellcolor{myblue!10}\rotatebox[origin=c]{90}{$G_2$}}&\cellcolor{green!15}0.44 & \cellcolor{green!15}0.69  &\cellcolor{green!15}1.13 \\
    \hline
     \cellcolor{myblue!10}&\cellcolor{green!15}0.72 & \cellcolor{green!15}0.76 &\cellcolor{green!15}1.48 \\
     \cellcolor{myblue!10}&\cellcolor{white!0}0.76 & 0.83 &\cellcolor{white!0}1.59 \\
     \multirow{-3}{*}{\cellcolor{myblue!10}\rotatebox[origin=c]{90}{$G_3$}}&\cellcolor{green!15}1.00 & \cellcolor{green!15}0.76 & \cellcolor{green!15}1.76 \\
    \hline
  \end{tabular}
    \caption{Select $\bar c_k$ samples per $G_k$. Two samples cannot be attributed to the right bin.}
  \end{subfigure}
   \begin{subfigure}{0.45\textwidth}
    \centering
  \begin{tabular}{|c |c c | c|}
    \hline
  \rowcolor{myblue!30} & {$S_1$} & {$S_2$} & {$I$} \\
    \hline
    \hline

   \cellcolor{myblue!10} & \cellcolor{green!15} 0.14 & \cellcolor{green!15} 0.83 &\cellcolor{green!15} 0.97 \\
  \multirow{-2}{*}{\cellcolor{myblue!10}\rotatebox[origin=c]{90}{$G_2$}} & \cellcolor{green!15} 0.76 &\cellcolor{green!15} 0.17 &\cellcolor{green!15} 0.93 \\
    \hline
  \end{tabular}
    \caption{The remaining two samples are mixed and binned.}
  \end{subfigure}
  \caption{One iteration on toy data}
  \label{fig:exampleToyData}
\end{table}
    
\end{example} 

\section{Numerical Results}
\label{sec:marketData}
\subsection{Setup}
It remains to demonstrate the effectiveness of the ISM algorithm on actual market data. In this section, we will consider an experiment on actual market data for the Dow Jones Industrial Average (DJIA) and its 30 constituents as of 12th August 2021\footnote{An accompanying Python repository of the implementation is available at \url{https://github.com/NFZaugg/BasketOptionsRearrangement}}. The market data consists of implied volatility surfaces and spot prices for the 30 assets and the index. \Cref{fig:data} shows an overview of the available data and the selection for the parameters.

To extract the marginal distribution of the constituents as well the index distribution we calibrate SABR (Stochastic Alpha-Beta-Rho) parameters \cite{HagaKumaLesnWood02} for each of the 30 constituents and each maturity $T$, fixing parameter $\beta$ to $0.9$. The choice of SABR is not relevant as long as it provides a smooth price of $S_n(t)$, such that we can estimate $F_{S_n}(t)$. The first few of the calibrated parameters are shown in \Cref{fig:head-param}. The same model is calibrated for the index options which enables the estimation of $F_{I^\text{Mkt}(t)}$.
\begin{table}[H]
  \centering
\begin{tabular}{>{\columncolor{myblue!10}}l| >{\columncolor{myblue!10}}l}
    \hline
    \multicolumn{2}{c}{Data Information} \\
    \hline
        \hline
        Date & 2021/08/12\\
        Number of constituents $(N)$ & 30\\
        Constituent weights & 0.066 \\
    Vol Surface Moneyness & 0.8, 0.85, 0.9, 0.95, 0.975, 1, 1.025, 1.05, 1.1, 1.15, 1.2\\
    Vol Surface Maturity &  3m, 6m, 1y, 1.25y, 2y \\
    \hline
  \end{tabular}
  \newline\newline\newline
    \centering
\begin{tabular}{>{\columncolor{myblue!10}}l| >{\columncolor{myblue!10}}l}
    \hline
    \multicolumn{2}{c}{Hyperparameters} \\
    \hline
        \hline
        Number of Samples $(M)$ & 20'000\\
        Number of Bins & 1'400\\
        Number of Iterations & 10\\
        Constituents Marginal Model & SABR, fixed $\beta =0.9$\\
        Index model & SABR, fixed $\beta =0.9$\\
    \hline
  \end{tabular}
      \caption{Experiment setup}
      \label{fig:data}
\end{table}

We generate a matrix of $M$ samples of $N$ independent uniform random variables and use \Cref{eq:estimation} to transpose those into random variables with the proper marginal distribution function. 

With the distribution of the index, we calculate the target vector (\ref{eq:targetVector}) with 2'000 equidistant bins, yielding the discrete PDF we are aiming to match. Having collected all the input, we run the algorithm to rearrange the samples.
\begin{table}[H]
    \centering
    \begin{tabular}{>{\columncolor{myblue!10}}l|>{\columncolor{myblue!10}}c >{\columncolor{myblue!10}}c >{\columncolor{myblue!10}}c >{\columncolor{myblue!10}}c}
    \hline
        \cellcolor{white}Name & \cellcolor{white}$\beta$ & \cellcolor{white}$\alpha$ &\cellcolor{white} $\rho$ & \cellcolor{white}$\gamma$ \\
        \hline
        \hline
        United Health & 0.9 & 0.47 & -0.52 & 1.20 \\
        Home Depot & 0.9 & 0.45 & -0.28 & 1.47 \\
        Goldman Sachs & 0.9 & 0.50 & -0.33 & 1.46 \\
        Microsoft Corp & 0.9 & 0.43 & -0.43 & 1.49 \\
        \dots & \dots &&\dots&\\
        \hline
        DJIA & 0.9 & 0.31 & -0.64 & 1.91 \\
           \hline
    \end{tabular}
    \caption{First 4 calibrated parameters, $T=3m$}
    \label{fig:head-param}
\end{table}
\subsection{Index Repricing}
\Cref{fig:marketvsmodel} shows the implied volatility of the index option market prices vs the prices obtained by the model. The model can reprice the index option from the market well. The remaining error is split between calibration inaccuracies of $F_{I(t)}$ and the discrete error, which is displayed in \Cref{fig:sampleless}. The errors, which do not exceed 0.2\% in the implied volatilities per moneyness, are well within the bid and ask quotes of the options (ranging from 0.2\% to 4\%). We conclude that the model is fit to price basket options according to the market-implied correlation skew.
\begin{figure}[H]
    \centering
    
    \includegraphics[width=\textwidth]{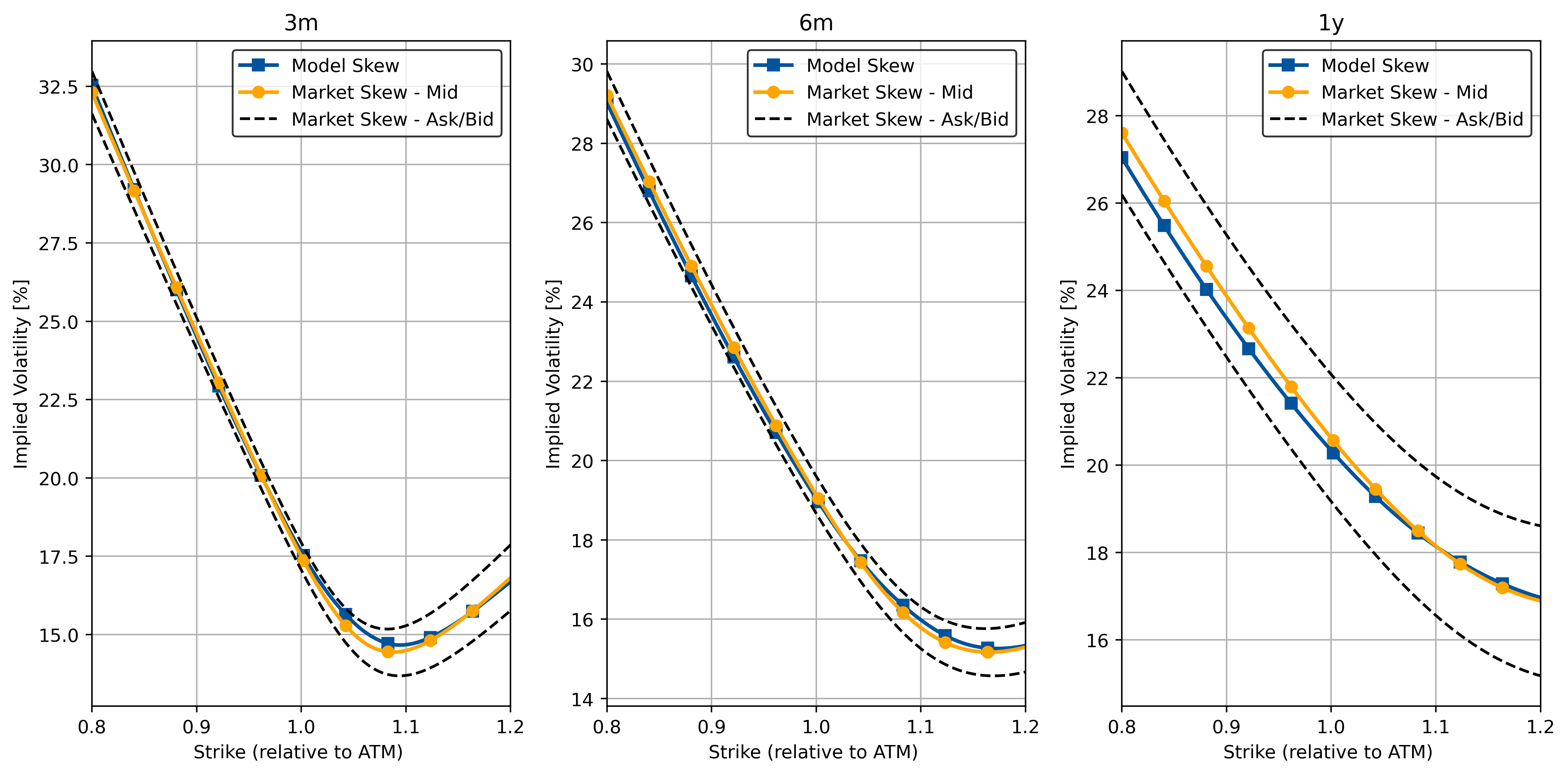}
    \includegraphics[width=\textwidth]{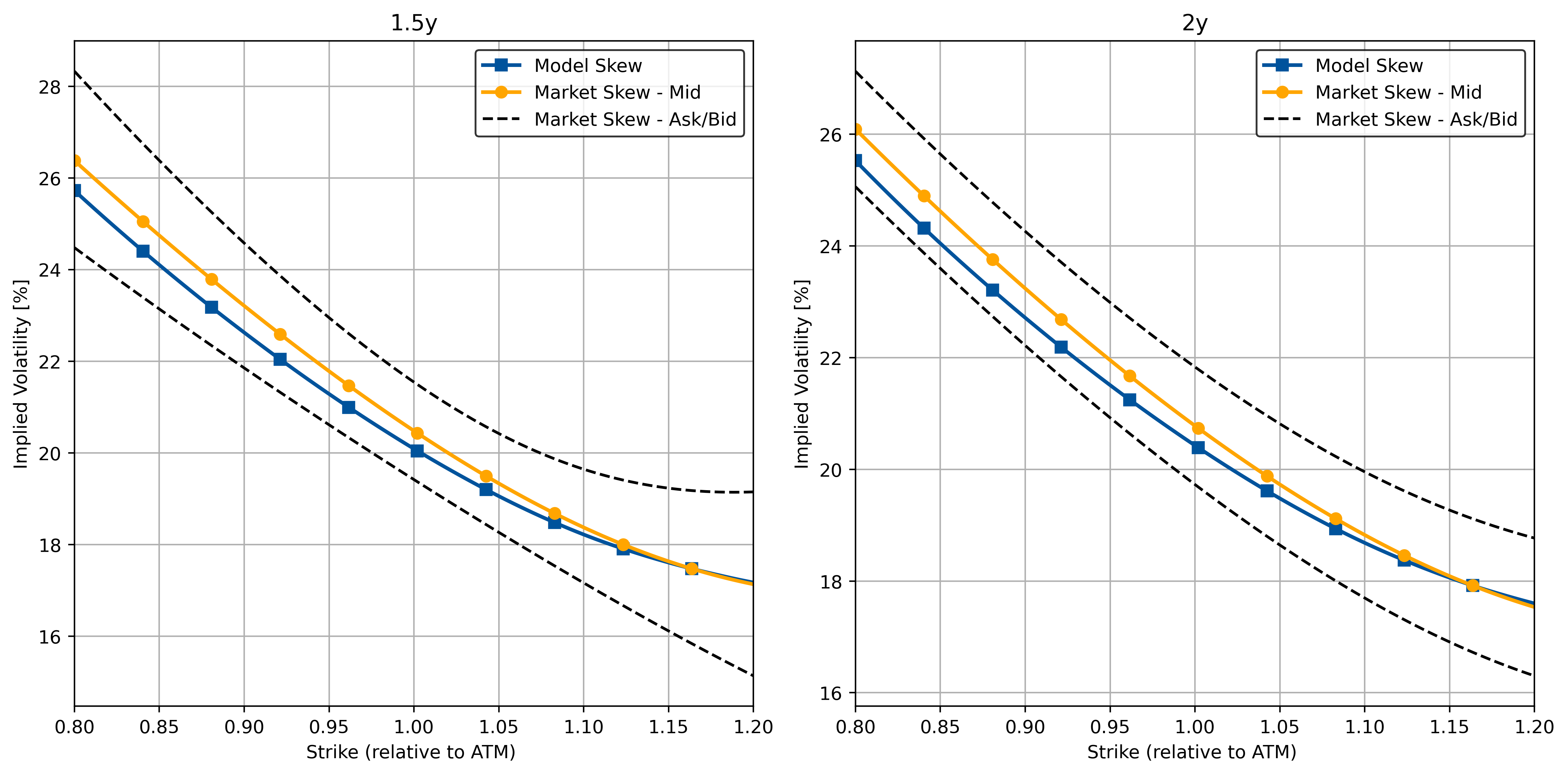}
    \caption{Market vs model implied volatilities - DJIA options. Note: The increased mispricing of 1.5y and 2y mainly stem from numerical inaccuracies in $F_{I(t)}$, which assumes significant mass at $I(t) = 0$ }
    \label{fig:marketvsmodel}
\end{figure}

\begin{table}[H]
    \centering
    
  \begin{tabular}{>{\columncolor{myblue!10}}c>{\columncolor{myblue!10}}c>{\columncolor{myblue!10}}c>{\columncolor{myblue!10}}c>{\columncolor{myblue!10}}c>{\columncolor{myblue!10}}c}
\hline
    \rowcolor{white} Maturity & 3m & 6m & 1y & 1.5y&2y \\
        \hline
    Discrete Error $\hat{\ell}(S) $& $2.09\%$ & $1.54\%$ & $1.6\%$& $1.92\%$ & $2.27\%$ 
  \end{tabular}
      \caption{Discrete error per maturity}
      \label{fig:sampleless}
\end{table}
\subsection{Performance}
Given the results from the previous subsection, we analyze the time spent on each step in the calibration of the model. \Cref{fig:performance} shows that for each maturity, the total time spent was between 5 and 7 seconds. Most of the time was spent on calibrating constituents and generating independent samples. The arranging of the samples took about 1/3 of the time. Given that this algorithm can be run in parallel, the LVM can thus be calibrated in below 7 seconds with 20'000 samples.
\begin{table}[H]
    \centering
    
  \begin{tabular}{>{\columncolor{myblue!10}}l>{\columncolor{myblue!10}}c>{\columncolor{myblue!10}}c>{\columncolor{myblue!10}}c>{\columncolor{myblue!10}}c>{\columncolor{myblue!10}}c}
\hline
    \rowcolor{white} Action & 3m & 6m & 1y & 1.5y&2y \\

    Calibrating Constituents& 1.76 & 1.60 & 1.76& 1.83 &2.09\\
     \hline
    Calibrating Index (Bid Ask Mid)& 0.19 & 0.19 & 0.2& 0.2 &0.19\\
     \hline
    Sampling Constituents& 1.44 & 1.52 & 1.63& 1.56 &1.79\\
     \hline
    Preparing Target & 0.07 & 0.08 &0.08& 0.07 &0.08\\
     \hline
    Arranging Samples& 1.27 & 1.56& 1.72&1.92&2.1\\
     \hline
      \hline
      Total & 5.73s & 5.95s & 6.39s & 6.58s & 6.25s\\
  \end{tabular}
      \caption{Time Breakdown (s) for example of 30 constituents and 20'000 samples}
      \label{fig:performance}
\end{table}
\subsection{Greeks Calculation}
The rapid calibration of the model allows for fast Greeks calculation for the basket derivatives with a finite difference approximation. Suppose that $\Theta_{LVM} = \Theta_{LVM}(\sigma, \sigma_I)$ is the local variance function of (\ref{eq:lvariance}) after calibration given the input market data $\sigma$ and $\sigma_I$. Here, $\sigma = (\sigma_1,\sigma_2,\dots,\sigma_N)$ and $\sigma_I$ are the implied volatility curves of the constituent and index, respectively.

We express the price of a derivative $V\left(B(0); \Theta_{LVM}\right) = V\left(B(0)\right)$ as a pricing function of a basket derivative given the parameters and a spot price $B(0)$. The delta and gamma of $V$, the first and second derivatives with respect to the spot price, are approximated using finite differences, for instance
\begin{align}
    \Delta_B = \frac{\partial  V\left(B(0)\right)}{\partial B(0) } &\approx   \frac{V\left(B(0) + \epsilon\right) - V\left(B(0)\right) }{\epsilon}, \\
        \gamma_B =\frac{\partial^2  V\left(B(0)\right)}{\partial B(0) ^2} &\approx \frac{V\left(B(0) - \epsilon\right) - 2 V\left(B(0)\right) + V\left(B(0) + \epsilon \right) }{\epsilon^2},
\end{align}
    for some bump size $\epsilon > 0$. The sensitivity to the individual constituents is given by the chain rule \footnote{Note that if the basket is weighted, the delta needs to be weighted too}
\begin{align}
    \frac{\partial  V\left(B(0)\right)}{\partial S_n(0) } &= \Delta_B \cdot \frac{\partial B(0)}{\partial S_n(0)} =  \Delta_B.
\end{align}
Furthermore, one can also calculate the sensitivities to the constituent implied volatility curves. Let vega $v_n$ be the sensitivity of the basket derivative to the $n$-th constituent implied volatility curve,
\begin{align}
    v_n =  \frac{\partial  V\left(B(0)\right)}{\partial \sigma_n } \approx \frac{V\left(B(0); \Theta_{LVM}(\sigma+\mathbb{\epsilon}_n, \sigma_I)\right) - V\left(B(0); \Theta_{LVM}(\sigma, \sigma_I)\right)}{\epsilon},
\end{align}
where $\mathbb{\epsilon}_n$ is the vector $(0,0,\dots,0,\epsilon,0,\dots,0)$. To obtain $V\left(B(0); \Theta_{LVM}(\sigma_n+\epsilon, \sigma_I)\right)$ we need to re-calibrate the model, as $\Theta_{LVM}$ are changing due to the shift in implied volatility. Fortunately, the re-calibration is a surprisingly computationally cheap procedure. As a first step, we re-calibrate the volatility model for the constituent $n$. This provides us with an updated probability distribution function 
\begin{equation}
    F_{S_n(t)}'(x) = \Q \left( S_n(t; \sigma_n+\epsilon) < x )\right), \; x \in \R ,
\end{equation} where $S_n(t; \sigma_n+\epsilon)$ indicates the constituent price process at $t$ under its new distribution due to the shift in volatility. To maintain the same dependency structure as before the shift, we have to adjust the samples consistently. In the initial calibration, for a fixed $t \in T_E$, we obtained a sample vector for a uniform random variable $(u_n^1, u_n^1,\dots, u_n^M)$. Furthermore, we used a permutation $\pi_n$ to rearrange the samples. We can now simply obtain the samples with the new distribution as
\[ \{{F_{S_n(t)}'}^{-1} (u_n^m) : m \leq M)\},\]
and order them in the same way as the old samples
\[ \{{F_{S_n(t)}'}^{-1} (\pi_n(u_n^m)) : m \leq M\}.\]
Important is that we reuse the same samples for the uniform random variable $u_n^m$, as the permutation is determined based on these values. Alternatively, the same can be obtained by applying the function \begin{equation}
    x \mapsto {F_{S_n(t)}'}^{-1} \left( F_{S_n(t)}\left(x\right)\right) ,
\end{equation} to the pre-shift samples of $S_n(t)$.
We thus only adjust the samples for the $n$-th constituent and re-calibrate the local variance function for every $t\in T_E$. The total computational cost is low as no new sample ordering is required. Results for $\Delta,\gamma,v$ are shown in \Cref{fig:deltaGamma}, \Cref{fig:vega}. Other Greeks are obtained similarly.
\begin{table}[H]
    \centering
    
  \begin{tabular}{>{\columncolor{myblue!10}}l>{\columncolor{myblue!10}}c>{\columncolor{myblue!10}}c>{\columncolor{myblue!10}}c>{\columncolor{myblue!10}}c>{\columncolor{myblue!10}}c}
\hline
    \rowcolor{white} Strike (relative to ATM) & 0.8& 0.9&1&1.1&1.2 \\

    $\Delta$& 0.95 & 0.88 & 0.60& 0.09 &0.01\\
    $\gamma$& 1e-7 & 1e-6 & 0.03& 1e-10 & 1e-10\\
  \end{tabular}
      \caption{$\Delta, \gamma$ of DJIA 3m European option at spot price for various strikes }
      \label{fig:deltaGamma}
\end{table}
\begin{table}[H]
    \centering
    
    \begin{tabular}{>{\columncolor{myblue!10}}l|>{\columncolor{myblue!10}}c }
    \hline
        \cellcolor{white} Constituent Name & \cellcolor{white}$v_i$ \\
        \hline
        \hline
        United Health & 4.63\%  \\
        Home Depot & 3.68\%  \\
        Goldman Sachs & 3.64\% \\
        Microsoft Corp & 3.02\%  \\
        Salesforce Inc & 2.32 \%\\
           \hline
    \end{tabular}
      \caption{Vega of DJIA 3m European option for first 5 constituents at the money (Upwards finite difference, $\epsilon = 0.01$  (1\% in implied volatility)) }
      \label{fig:vega}
\end{table}
\section{Conclusions}
\label{sec:concl}
Pricing basket options consistently with index options poses a significant challenge. Creating a model that accurately reproduces the index skew is often challenging to calibrate efficiently and, therefore, undesirable for practical purposes. The difficulty arises from the intricate nature of modeling the dependency structure between constituents, which is crucial for accurate pricing and is often not straightforward to extract from available market data.

Copula functions, describing the joint distribution of random variables mapped to a uniform space, emerge as valuable tools for modeling such dependency structures. Calibrating a copula model to the available market data remains difficult, due to two main issues. Firstly, the inherent multidimensionality of the problem makes calibration of a multidimensional model computationally expensive. Furthermore, the problem of underdetermination of the copula $C$ can lead to errors in the pricing of basket options, even if the model can replicate index options without errors. In this paper, we define a copula $C$ to be a valid pricing model for basket options if the copula is symmetric and replicates the index PDF.

Recognizing the challenges of directly modeling copulas, we utilize rearrangement algorithm to imply an admissible and symmetric copula for the dependency of the constituents. These algorithms work by initializing samples according to the known marginal distribution and then pragmatically rearranging them to reflect the dependency structure to match the index probability distribution. Extra constraints based on the empirical symmetry are added to infer a symmetric copula.

We develop a particular rearrangement algorithm, called the Iterative-Sort-Mix algorithm, that automatically calibrates the local volatility function using a simple rearrangement technique, sidestepping the need for an expensive multivariate parameter optimization. Furthermore, the algorithm calculates the Greeks, in particular sensitivities to the constituents' implied vol curves, with ease and avoids additional expensive calculations.

The viability of this approach is tested on historical market data, revealing its effectiveness in calibrating a model for pricing baskets with up to 30 constituents within seconds and with limited errors. The algorithm thus exhibits good performance and is suitable for practical purposes. 

Furthermore, based on the theoretical considerations, the approach encourages the development of novel and enhanced rearrangement algorithms with robust convergence properties. A promising direction is the use of artificial intelligence to address the high-dimensional problem of rearranging the samples in an appropriate order. 
\bibliography{bib}
\appendix
\section{Algorithm Pseudo-Code}
\label{app:appendixa}
We express the algorithm as a pseudo-code:
\newline\newline
\begin{algorithm}[H]
\SetAlgoLined
    \SetAlgoNlRelativeSize{-1}
    \SetAlgoNlRelativeSize{-1}
    \SetKwFunction{FSort}{Sort}
    \SetKwFunction{FSelect}{SelectValidSamples}
    \SetKwFunction{FMix}{Mix}
        \SetKwFunction{FMain}{Main}
    \caption{Iterative Sort-Mix}
    \tcc{As an input, we have initialized samples, the discretized target vector, and the corresponding bins of the discretization}
    \KwInput{sampleMatrix: $(M \times N)$, targetVector $(g\times 1)$, bins $((g+1) \times 1)$}
    
    \KwOutput{outputMatrix: $(M \times N)$}
    \vspace{0.3cm}
          \SetKwProg{Fn}{Def}{:}{}
    \Fn{\FMain{sampleMatrix,targetVector,bins}}{
    remainingMatrix $\leftarrow$ sampleMatrix\;
    outputMatrix = new emptyMatrix\;
    \While{$\left(\frac{remainingMatrix.Dim1}{M} \geq \epsilon \right)$}{
        \tcc{Part 1: Sort, pick, update}
        Sort(remainingMatrix)\;
        outputMatrix $\leftarrow$ SelectValidSamples(outputMatrix, remainingMatrix, targetVector, bins)\;
        \BlankLine
        \tcc{Part 2: Mix, pick, update}
        Mix(remainingMatrix)\;
        outputMatrix $\leftarrow$ SelectValidSamples(outputMatrix, remainingMatrix, targetVector, bins)\;
        
    }
    \KwRet outputMatrix\;}
\BlankLine
  \Fn{\FSort{matrix}}{
  \tcc{This method sorts each column of the matrix in ascending order}
                \For{column in matrix}    
        { 
        	column $\leftarrow$ sortVectorAscending(column)
        }
        \KwRet matrix\;
  }
    \Fn{\FMix{matrix}}{
    \tcc{This method mixes each column  of the matrix}
                \For{column in matrix}    
        { 
        	column $\leftarrow$ mixColumnRandomly(column)
        }
        \KwRet matrix\;
  }
    \Fn{\FSelect{outputMatrix, remainingMatrix,target,bins}}{
    \tcc{This method selects all samples which are deemed valid from "remainingMatrix" and stores them in "outputMatrix"}
    \tcc{First we get the index prices from the samples}
        indexPrices = sumOverRows(remainingMatrix)\;
                \For{bin in bins}    
        {       
            \tcc{Then we select all samples which fall into a bin}
                rowsInBins = remainingMatrix.where(indexPrices in bin)\;

                tcc{Now we store at most N samples of rowsInBins}
                N = targetVector(bin)\;
                selectedRows = selectMaxRows(rowsInBins,N)\;
        	outputMatrix.add(selectedRows)\;
                matrix.remove(selectedRows)\;
                target(bin) $\leftarrow$ target(bin) - count(selectedRows)
        }
        \KwRet outputMatrix\;
  }
  
\end{algorithm}
\section{Pricing baskets with assets not contained in $N_I$}
The proposed methodology focuses on pricing derivatives for a basket of assets that are contained in the index. However, it can be extended to handle derivatives on baskets of constituents not part of the index but presumed to exhibit a similar correlation structure.

Consider a stock $S_e(t)$ where $e > N$. Suppose we aim to price derivatives on a basket $N_B^+ = N_B \cup {e}$, with $N_B \subset \N_{\leq N}$ representing a specific basket subset. Since there is no available information about the correlation structure of $S_e$ with another asset in $N_B$, coherent pricing of these derivatives is not possible. Yet, if we anticipate comparable correlation characteristics between $S_e$ and the other basket constituents (for instance, if the stocks share the same industry), we can employ the calibrated copula from the index to also include the asset $e$.

To implement this, consider $s_n(t)$ as samples of a constituent (following the order obtained after a certain step, e.g., step ii)), with $n \notin N_B$ to ensure these samples are not used in the basket calibration. Applying the probability distribution function of $n$ to these samples allows us to obtain uniformly distributed samples.
\begin{equation}
    u_n(t) = F_{S_n(t)}(s_n(t)).
\end{equation}
As the ordering of $u_n(t)$ remains consistent with other samples, the copula remains unchanged. By projecting these samples using the probability distribution of $S_e$, we derive samples represented as
\begin{equation}
    s_e(t) = F_{S_e(t)}( u_n(t) ),
\end{equation}
which are now used to obtain the basket samples
\begin{equation}
    b(t) = \sum_{n \in N_B^+}s_n(t).
\end{equation}
This approach thus allows for the construction of baskets with up to $N$ external components.
\section{Proof of \Cref{lem:nonempty}}
\label{app:proof}
The proof of \Cref{lem:nonempty} is the theoretical justification of a rearrangement algorithm. Although conceptually not difficult, the proof is technically involved. We will show the proof for a simplified model of two assets since the generalization to multiple assets is trivial. Consider a model of two assets $S_1, S_2$. Since we only consider a fixed time, we drop the time indication. The probability distribution of the sum of $S_1$ and $S_2$ is given by $F_{S_1+S_2}(x)$.

For the rearrangement algorithm, we start with two independent random variables, which are equal in distribution to $S_1$ and $S_2$. We denote them as $\bar{S}_1$, $\bar{S}_2$ such that $F_{S_1} = F_{\bar{S}_1}$ and $F_{S_2} = F_{\bar{S}_2}$. The proof of the theorem is simple: We first assume that we can obtain samples of $S_1$ and $S_2$ directly. Since these samples have all the desired properties, we will show that we can find an ordering $\pi$, such that $\pi(\bar s_1),\pi(\bar s_2)$ is very close to the samples $s_1, s_2$ obtained from $S_1$ and $S_2$

The first observation we make is that if $[s]_M$ is a sample of size $M$, the empirical distributions $\hat{F}_{\bar{S}_n, [s]_M}$ and $\hat{F}_{S_n, [s]_M}$ converge uniformly almost surely, as $M$ goes to infinity. This is a consequence of the Glivenko-Cantelli theorem since both empirical distributions converge uniformly a.s. to the same probability distribution. 

This means that we can fix $\epsilon$ and with probability 1, samples of $[s]_M$ of size $M$ (of all 4 random variables) will be such that
\begin{equation}
   \left|\hat{F}_{S_n,  [s]_M}(x) - \hat{F}_{\bar{S}_n,  [s]_M}(x)\right | < \epsilon, \hspace{0.5cm} \forall x \in \R, n \in \{1,2\}.
\end{equation}

Let us now define a permutation on $\bar{s}_n$, which will be used for this purpose. We first define the ordering per"mutation $o_{s_n} \colon \N_M \to \N_M $ as the permutation which ordered a sample vector in ascending order:
\begin{equation}
    o_{s_n}(s_n) = (s_n^{(1)},s_n^{(2)},\dots, s_n^{(M)}), 
\end{equation}
where $s_n^{(1)}$ and $s_n^{(1)}$ are the smallest and largest sample of $s_n$, respectively
For both $n \in \{1,2\}$, we define $\pi_n$ as:
\begin{equation}
    \pi_n(\bar{s}_n) = o^{-1}_{s_n}  o_{\bar{s}_n}(\bar{s}_n).
\end{equation}
The proof now consists of showing that the following quantity can be bounded based on $\epsilon$ for all $x$, so that it converges to $0$ as $M \to \infty$
\begin{equation}
    E(x) := \left|\hat{F}_{S_1+S_2, [s]_M}(x) - \hat{F}_{\bar{S}_1+\bar{S}_2, \pi([s]_M)}(x) \right|.
\end{equation}
Based on the bound of the marginal distributions, we will now make the following assumption, which leads to the worst possible bound of $E(x)$:
\begin{equation}
   \hat{F}_{S_n,  [s]_M}(x)  + \epsilon =  \hat{F}_{\bar{S}_n,  [s]_M}(x), \hspace{0.5cm} \forall x \in \R, n \in \{1,2\}.
\end{equation}
Without loss of generality, we chose  $\hat{F}_{S_n,  [s]_M}(x) <  \hat{F}_{\bar{S}_n,  [s]_M}(x) $. We find
\begin{equation}
    \hat{F}_{S_1+S_2,  [s]_M}(x)  = \int_{0}^\infty \frac{1}{M}\left|\{ m \leq M : s^m_1 \leq y , s^m_2 \leq x-y \}\right|\d y
\end{equation}
and 
\begin{equation}
    \hat{F}_{\bar{S}_1+\bar{S}_2, \pi([s]_M)}(x)  = \int_{0}^\infty \frac{1}{M}\left|\{ m \leq M : \bar{s}^m_1 \leq y , \bar{s}^m_2 \leq x-y \}\right|\d y.
\end{equation}
The difference between the two quantities depends on the size of the set under the integral:
\begin{equation}
    E(x) = \int_0^\infty \frac{1}{M} \left|\{ m \leq M : s^m_1 \leq y , s^m_2 \leq x-y \}\right| -  \left|\{ m \leq M : \bar{s}^m_1 \leq y , \bar{s}^m_2 \leq x-y \}\right| \d y
\end{equation}
This term under the integral is of the following form $|A \cap B| - |C \cap D |$, where $C \subset A$ and $D \subset B$ due to the assumption above. With the general equation $|A \cap B | + |A \cup B| = |A| + |B|$, we then obtain:
\begin{align}
    |A \cap B| - |C \cap D | &= |A| + |B| - |A \cup B| - (|C| + |D| - |C \cup D| )\\
    &= |A| - |D| + |B| - |C| - (|A \cup B| -  |C \cup D| )
\end{align}
We observe now that since $(|A \cup B| -  |C \cup D| )$ is certainly positive, it is also bounded by
\begin{equation}
    (|A \cup B| -  |C \cup D| ) \leq |A| - |D| + |B| - |D|.
\end{equation}
For this reason, we can obtain the previous result that
\begin{equation}
    |A \cap B| - |C \cap D | \leq |A| - |D| + |B| - |D| < 2\epsilon.
\end{equation}
Therefore, combining all the results, we have that
\begin{equation}
    E(x) \leq \int_0^\infty \frac{1}{M} \left| |A \cap B| - |C \cap D | \right| \leq \int_0^T 2 \epsilon.
\end{equation}
Since the samples $s_1,s_2,\bar{s}_1, \bar{s}_2$ are almost surely bounded, we obtain that
\begin{equation}
    E(x) \leq 2\epsilon B
\end{equation}
for some $B < \infty$
\end{document}